\documentclass{article}
\usepackage{geometry}
\usepackage[latin1]{inputenc}

\usepackage{amsmath,amsthm,amssymb,subdepth}
\usepackage{amsfonts,mathrsfs,enumerate}

\usepackage{hyperref}

\title{The Phase Space for the Einstein-Yang-Mills Equations and the First Law of Black Hole Thermodynamics}
\date{\today}
\author{Stephen\hspace{-1.5mm}\newlength{\Mheight}
\newlength{\cwidth}
\settoheight{\Mheight}{M}\settowidth{\cwidth}{c}M\parbox[b][\Mheight][t]{\cwidth}{c}\hspace{0.15mm}Cormick\footnote{stephen.mccormick@monash.edu}\vspace{3mm}\\School of Mathematical Sciences\\Monash University\\Clayton 3800\\Australia}
\newtheorem{theorem}{Theorem}[section]

\newtheorem{corollary}[theorem]{Corollary}

\newtheorem{lemma}[theorem]{Lemma}

\newtheorem{proposition}[theorem]{Proposition}

\newcommand{\onabla}{\mathring\nabla}
\newcommand{\lie}[1]{\mathfrak{#1}}

\numberwithin{equation}{section}
\DeclareMathOperator{\Ran}{Ran}

\DeclareMathOperator{\tr}{tr}

\DeclareMathOperator{\coker}{coker}

\begin{document}
\maketitle
\begin{abstract}
We use the techniques of Bartnik \cite{phasespace} to show that the space of solutions to the Einstein-Yang-Mills constraint equations on an asymptotically flat manifold with one end and zero boundary components, has a Hilbert manifold structure; the Einstein-Maxwell system can be considered as a special case. This is equivalent to the property of linearisation stability, which was studied in depth throughout the 70s \cite{armsEM,armsEYM,BDinstability,C-BDstability,FM1,moncrief1,moncrief2}.

This framework allows us to prove a conjecture of Sudarsky and Wald \cite{SW1}, namely that the validity of the first law of black hole thermodynamics is a suitable condition for stationarity. Since we work with a single end and no boundary conditions, this is equivalent to critical points of the ADM mass subject to variations fixing the Yang-Mills charge corresponding exactly to stationary solutions. The natural extension to this work is to prove the second conjecture from \cite{SW1}, which is the case where an interior boundary is present; this will be addressed in future work.

\end{abstract}
\section{Introduction}
A solution to the full Einstein-Yang-Mills equations is given by a Lorentzian metric and a $\lie{g}-$valued one-form on a 4-dimensional manifold $^4\hspace{-0.6mm}\mathcal{M}$, where $\lie{g}$ is the Lie algebra of some compact Lie group $\hat{G}$. It is well known, that given a sufficiently regular solution, $(g,A,\pi,\varepsilon)$ of the constraint equations (\ref{constraint1})-(\ref{gauss}) on a 3-manifold, $\mathcal{M}$, we can find a full solution to the Einstein-Yang-Mills equations with an embedded hypersurface on which the initial data is induced. By this, we mean $g$ is the induced metric; $\pi=(K-tr(K))\sqrt{g}$, where $K$ is the second fundamental form; $A$ is the projection of the Yang-Mills connection onto the hypersurface; and $\varepsilon$ is four times the negative of the induced Yang-Mills electric field density, $E$, as viewed by a Gaussian normal set of observers. We use $\pi$ and $\varepsilon$ instead of $K$ and $E$ as these quantities are the canonical momenta for the Hamiltonian formulation used.

Geometrically, the Yang-Mills fields can be thought of as coming from some principal $\hat{G}$-bundle,  $^4\hspace{-0.6mm} P$;  for simplicity, we assume this bundle is trivial, $^4\hspace{-0.6mm} P\cong {}^4\hspace{-0.6mm}\mathcal{M}\times \hat{G}$. The interpretation of the Yang-Mills potential, $^4\hspace{-0.6mm}A$ is as the pullback of a connection one-form $\omega$ on $^4\hspace{-0.6mm}P$ via a global section. Given a global section $\iota: {}^4\hspace{-0.6mm}\mathcal{M}\rightarrow {}^4\hspace{-0.6mm}\mathcal{M}\times \hat{G}$, we define $^4\hspace{-0.6mm}A=\iota^*\omega$ and $^4\hspace{-0.6mm}F=\iota^*(\Omega)$, where $\Omega=d\omega+\omega\wedge\omega$ is the curvature form of $\omega$. The Yang-Mills potential $^4\hspace{-0.6mm}A$ is a $\lie{g}$-valued one-form on $^4\hspace{-0.6mm}\mathcal{M}$ and $^4\hspace{-0.6mm}F$ is a $\lie{g}$-valued two-form on $^4\hspace{-0.6mm}\mathcal{M}$, called the field strength tensor. The restriction of $^4\hspace{-0.6mm}A$ and $^4\hspace{-0.6mm}F$ to $\mathcal{M}$ can be viewed respectively, as the pullback via some section of a connection one-form and associated curvature on a bundle $P\cong \mathcal{M}\times \hat{G}$, the restriction of $^4\hspace{-0.6mm}P$ to $\mathcal{M}$.

The outline of this article is as follows: In section \ref{prelim}, we define a Hilbert manifold structure for the set of possible initial data, the phase space. In section \ref{Sconstraint}, we use an implicit function theorem argument to prove that the set of solutions to the constraint equations, is a Hilbert submanifold of the phase space; we call this the constraint submanifold. In section \ref{Shamiltonian} we define the energy, momentum and charge functionals on the phase space, and construct an appropriate Hamiltonian for this system, similar to that of Regge and Teitelboim \cite{RT}. This Hamiltonian ensures that Hamilton's equations give the correct evolution equations and on shell, it gives a value for the total energy of the system. In section \ref{Sfirstlaw}, we use a Lagrange multiplier argument to prove that stationarity is not only a sufficient condition for the first law of black hole thermodynamics to hold, it is necessary. Evidence for this is given in \cite{SW1}, however a rigorous proof requires the Hilbert manifold structure discussed in section \ref{Sconstraint}.

The phase space considered is tuples $(g,A,\pi,\varepsilon)$ with $H^2\times H^2\times H^1\times H^1$ local regularity, with appropriate decay conditions on $g,\pi$ for asymptotically flat spacetimes. The decay conditions on the fields $A$ and $\varepsilon$ (discussed in section \ref{prelim}) are more subtle; in addition to the requirement that the fields are asymptotically zero, we further require that $A$ approaches a collection of Maxwell (photon) fields at a faster rate. For simplicity, we work on a 3-manifold $\mathcal{M}$ with one asymptotic end and no interior boundary, however it is clear that all results will still hold in the case of many asymptotic ends. We will consider separately the case when $\mathcal{M}$ has an interior boundary.

The Hilbert manifold structure for the space of solutions is equivalent to the property of linearisation stability, which was studied by many authors throughout the 70s. Linearisation stability of Minkowski space was established by Choquet-Bruaht and Deser in 1973 \cite{C-BDstability} and in the same year, Fischer and Marsden proved linearisation stability for non-exceptional\footnote{Moncrief later proved that the exceptional data considered by Fischer and Marsden corresponds exactly to solutions exhibiting Killing fields \cite{moncrief1,moncrief2}} data on a compact manifold \cite{FM1}. This was subsequently extended to the Einstein-Maxwell \cite{armsEM} and Einstein-Yang-Mills \cite{armsEYM} cases by Arms. The general asymptotically flat case wasn't established until 2005 when Bartnik \cite{phasespace} provided a Hilbert manifold structure for the phase space for the Einstein equations. We follow the techniques of Bartnik to generalise this to the Einstein-Yang-Mills case. Like Bartnik, we consider a class of initial data too rough to guarantee that a solution to the constraints corresponds to a full solution, however linearisation stability can be obtained by noting that the analysis presented in section \ref{Sconstraint} remains valid if the phase space is required to have enough regularity for known existence and uniqueness theorems to be applied \cite{Klainerman05,RendallLRR}.

\section{Notation and Preliminary Definitions}\label{prelim}
Let $\mathcal{M}$ be a paracompact, connected, oriented and non-compact 3-manifold without boundary, and suppose there exists a compact $\mathcal{M}_0\subset\mathcal{M}$ and a diffeomorphism $\phi:\mathcal{M}\setminus \mathcal{M}_0\rightarrow\mathbb{R}^3\setminus\overline{B_1(0)}$, where $\overline{B_1(0)}$ is the closed unit ball. Let $\lie{g}$ be the Lie algebra of some compact Lie group $\hat{G}$, and recall that any such Lie algebra must be the direct sum of abelian and semi-simple Lie algebras. We can then define an adjoint invariant positive definite inner product, $\gamma$ on $\lie{g}$, given by the negative of the Killing form on the semi-simple factor; we may use the regular Euclidean inner product on the abelian factor.

Throughout this article, we use four different sets of indices on different objects as outlined below,
\begin{align*}
\begin{tabular}{|r||c    l|}
\hline
  $\mathcal{M}$, $\mathbb{R}^3$&Latin lower case, mid-alphabet &$i,j,...$\\\hline
  ${}^4\hspace{-0.6mm}\mathcal{M}$, $\mathbb{R}^{3,1}$ &Greek lower case, mid-alphabet  & $\mu,\nu...$  \\ \hline
  $\lie{g}$ &Latin lower case, early alphabet &$a,b...$    \\  \hline
  ${}^4\hspace{-0.6mm} P$, $(\mathbb{R}^{3,1}\oplus\lie{g})$ &Greek lower case, early alphabet &$\alpha,\beta...$     \\  \hline
\end{tabular}
\end{align*}

Fix a smooth background metric $\mathring{g}$ on $\mathcal{M}$, such that $\mathring{g}=\phi^*(\sigma)$ on $\mathcal{M}\setminus \mathcal{M}_0$, where $\sigma$ is the Euclidian metric on $\mathbb{R}^3$. In terms of this background metric, we define the weighted Lebesgue ($L^p_\delta$) and Sobolev ($W^{k,p}_\delta$) spaces as the completion of $C^\infty_c(\mathcal{M})$ with respect to the norms
\begin{align}
\left\|u\right\|_{p,\delta}&=
\left\{
\begin{array}{ll}
\left(\int_\mathcal{M}\left| u\right|^p r^{-\delta p-3}d\mu_o\right)^{1/p},& p<\infty\\
\text{ess sup}_\mathcal{M}(r^{-\delta}|u|), & p=\infty
\end{array}
\right.
\\
\left\|u\right\|_{k,p,\delta}&=\sum_{j=0}^k\|\mathring{\nabla}^j u\|_{p,\delta-j},
\end{align}
where $r\geq1$ is a smooth function on $\mathcal{M}$ and $r(x)=|\phi(x)|$ on $\mathcal{M}\setminus\mathcal{M}_0$ is the regular Euclidian distance from the origin in $\mathbb{R}^3$. Objects labeled with an `o' refer to objects associated with $\mathring{g}$, such as the connection $\onabla$ and volume form $d\mu_o$. Spaces of sections of bundles are defined on $\mathcal{M}$ in the regular way with respect to $\mathring{g}$ and $\gamma$ where appropriate and we will omit specifying the bundle where there is no risk of confusion. Intuitively, the spaces defined above contain functions of local regularity $L^p$ or $W^{k,p}$, which behave as $o(r^\delta)$ as $r\rightarrow\infty$, with derivatives decaying appropriately. Since we are working with a trivial bundle, we may choose the flat connection as a background and work in a gauge such that the background gauge covariant derivative is exactly $\mathring{\nabla}$.

The usual definition of Yang-Mills total charge \cite{globalcharges} is given by
\begin{equation}
4\pi Q^a:=\lim_{R\rightarrow\infty}\int_{S_R} *{}^4\hspace{-0.6mm}F^a\label{YMcharge},
\end{equation}
where $S_R:=\{x\in\mathcal{M}:r(x)=R\}$ is the sphere of Euclidean radius, $R$, for large $R$. Unfortunately this is gauge dependent in general; in fact, it may be that the integral is finite in some gauge and infinite in another. A sufficient condition to ensure the charge is well defined, is to ensure $[A_\mu,F^{\mu\nu}]\in L^1$ (see \cite{globalcharges}). In the Hamiltonian formulation, the choice of $A_0$ is still a gauge freedom, however we instead impose the condition that $[A_i,E^j]\in L^1$ for any possible choice of $E$ in the phase space.

Specifically, we will enforce that the dominant part of $A$ near infinity is valued in the centre of $\mathfrak{g}$. In the language of physics, this condition is that the gauge field behaves like a collection of photon fields near infinity.

In defining the following spaces, we will make use of the decomposition $\mathfrak{g}=\mathfrak{z}\oplus\mathfrak{k}$, where $\mathfrak{z}$ is the centre of $\mathfrak{g}$.
\begin{align*}
\mathcal{G}:&=W^{2,2}_{-1/2}(S_2)&
\mathcal{K}:&=W^{1,2}_{-3/2}(S^2\otimes\Lambda^3)\\
\mathcal{A}:&=W^{2,2}_{-1/2}(T^*\mathcal{M}\otimes \mathfrak{z})\oplus W^{2,2}_{-3/2}(T^*\mathcal{M}\otimes \mathfrak{k})&
\mathcal{E}:&=W^{1,2}_{-3/2}(T\mathcal{M}\otimes \mathfrak{g}^*\otimes\Lambda^3)\\
\mathcal{N}:&=L^2_{-1/2}(\Lambda^0\times T\mathcal{M}\times\mathfrak{g}\otimes\Lambda^0)&
\mathcal{N}^*:&=L^2_{-5/2}(\Lambda^3\times T^*\mathcal{M}\otimes\Lambda^3\times \mathfrak{g}^*\otimes\Lambda^3),
\end{align*}
where $\Lambda^k$ are $k$-forms on $\mathcal{M}$ and $S_2$ and $S^2$ are symmetric covariant and contravariant $2$-tensors on $\mathcal{M}$ respectively. The spaces $\mathcal{N}$ and $\mathcal{N}^*$ can be interpreted as spaces of vector fields and covector densities, respectively on ${}^4\hspace{-0.6mm} P$. The direct sum in the definition of $\mathcal{A}$ is understood as the internal sum in $W^{2,2}_{-1/2}(T^*\mathcal{M}\otimes \mathfrak{g})$. For an arbitrary $A=A_\mathfrak{z}+A_\mathfrak{k}\in\mathcal{A}$, we will write 
\begin{equation*}
\|A\|_\mathcal{A}:=\|A_\mathfrak{z}\|_{2,2,-1/2}+\|A_\mathfrak{k}\|_{2,2,-3/2}.
\end{equation*}
Note, if $\mathfrak{g}=\mathfrak{u}(1)=\mathfrak{z}$, then this includes the regular decay conditions ($E,B=O(r^{-2})$) for the Einstein-Maxwell equations.

Define the spaces
\begin{align*}
\mathcal{G}^+&=\left\{g:g-\mathring{g}\in\mathcal{G},g>0\right\}\\
\mathcal{G}^+_\lambda&=\left\{g\in\mathcal{G}^+:\lambda\mathring{g}<g<\lambda^{-1}\mathring{g}\right\},\qquad0<\lambda<1.
\end{align*}
From the weighted version of Morrey's inequality (\ref{Morrey}), we can deduce that both $g\in\mathcal{G}$ and $A\in\mathcal{A}$ are H\"older continuous with exponent $\frac{1}{2}$. In particular, the inequalities in the definitions of $\mathcal{G}^+$ and $\mathcal{G}^+_\lambda$ are understood in the pointwise sense.

The phase space we will consider is
\begin{equation}
\mathcal{F}:=\mathcal{G}^+\times\mathcal{A}\times\mathcal{K}\times\mathcal{E},
\end{equation}
which is independent of $\mathring{g}$ (and $\phi$) (see \cite{phasespace}). We now quote directly, the weighted Sobolev-type inequalities from \cite{AF}.
\begin{theorem}\label{ineqs} The following inequalities hold:
\begin{enumerate}[i.)]
\item If $1\leq p\leq q\leq\infty$, $\delta_2 < \delta_1$ and $u\in L^q_{\delta_2}$, then
\begin{equation}
\left\|u\right\|_{p,\delta_1}\leq c\left\|u\right\|_{q,\delta_2}\label{i}
\end{equation}
and thus $L^q_{\delta_2}\subset L^p_{\delta_1}$.
\item \textnormal{(H\"older)} If $u\in L^q_{\delta_1}$, $v\in L^r_{\delta_2}$ and $\delta=\delta_1+\delta_2$, $1\leq p,q,r\leq\infty$, then
\begin{equation}\label{holder}
\left\|uv\right\|_{p,\delta}\leq\left\|u\right\|_{q,\delta_1}\left\|v\right\|_{r.\delta_2},
\end{equation}
where $1/p=1/q+1/r$.
\item \textnormal{(Interpolation)} For any $\epsilon>0$, there is a $C(\epsilon)$ such that, for all $u\in W^{2,p}_\delta$
\begin{equation}\label{interpolation}
\left\|u\right\|_{1,p,\delta}\leq\epsilon\left\|u\right\|_{2,p,\delta}+C(\epsilon)\left\|u\right\|_{p,\delta},
\end{equation}
for $1\leq p\leq\infty$.
\item \textnormal{(Sobolev)} If $u\in W^{k,p}_\delta$, then
\begin{equation}\label{sobolev}
\left\|u\right\|_{np/(n-kp),\delta}\leq c\left\|u\right\|_{k,q,\delta}
\end{equation}
for $q$ satisfying $p\leq q\leq np/(n-kp)$.

If $kp>n$ then
\begin{equation}
\|u\|_{\infty,\delta}\leq c\|u\|_{k,p,\delta}\label{sobolev2}
\end{equation}
\item \textnormal{(Morrey's)} If $u\in W^{k,p}_\delta$ and $0<\alpha\leq k-n/p\leq 1$, then
\begin{equation}\label{Morrey}
\|u\|_{C^{0,\alpha}_\delta}\leq c\|u\|_{k,p,\delta},
\end{equation}
where the weighted H\"older norm is given by
\begin{align*}
\|u\|_{C^{0,\alpha}_\delta}:=&\sup_{x\in\mathcal{M}}\Big{(}r^{-\delta+\alpha}(x)\sup_{4|x-y|\leq r(x)}\frac{|u(x)-u(y)|}{|x-y|^\alpha}\Big{)}\\
&+\sup_{x\in\mathcal{M}}\left(r^\delta(x)|u(x)|\right)
\end{align*}

\item \textnormal{(Poincar\'e)} If $\delta<0$ and $1\leq p<\infty$, for any $u\in W^{1,p}_\delta$ we have
\begin{equation}\label{poincare}
\|u\|_{p,\delta}\leq c \|\mathring{\nabla}u\|_{p,\delta-1},
\end{equation}
\end{enumerate}
where $n=3$ is the dimension of $\mathcal{M}$.
\end{theorem}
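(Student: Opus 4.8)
The plan is to prove every inequality first for $u\in C^\infty_c(\mathcal{M})$ and then pass to the completion, so that all of the asserted inclusions ($L^q_{\delta_2}\subset L^p_{\delta_1}$, etc.) follow at once. The first move is a reduction to flat space: write $\mathcal{M}=\mathcal{M}_0\cup(\mathcal{M}\setminus\mathcal{M}_0)$. On the compact piece $\mathcal{M}_0$ the weight function $r$ is bounded above and below, so the weighted norms are equivalent to their unweighted counterparts and the estimates follow from the classical Sobolev, interpolation and Morrey theory for the fixed smooth metric $\mathring{g}$ on a compact manifold; contributions near the interface $r\simeq 1$ are absorbed into a harmless overlap with what follows. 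On $\mathcal{M}\setminus\mathcal{M}_0$ we transport everything through $\phi$ to $\mathbb{R}^n\setminus\overline{B_1(0)}$, where $\mathring{g}$ is Euclidean and $\onabla$ is the ordinary derivative, so it remains to establish the weighted inequalities on a Euclidean exterior region.

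The engine for parts (i)--(v) is a dyadic rescaling argument. Decompose the exterior region into annuli $A_k=\{x:2^{k-1}\le|x|\le 2^{k+1}\}$, $k\ge 0$, which have bounded overlap and on each of which $r\simeq 2^k$; rescale $A_k$ to the fixed annulus $A_0$ by $x\mapsto 2^{-k}x$, setting $\tilde u(y)=u(2^k y)$. The weight shift $\delta-j$ built into the definition of $\|\cdot\|_{k,p,\delta}$ is arranged precisely so that, after this rescaling, the weighted norm of $u$ over $A_k$ becomes exactly $2^{-k\delta}$ times the ordinary Sobolev norm of $\tilde u$ over $A_0$ (and likewise for the weighted $L^p$ and weighted H\"older norms, with the $p$- or $\alpha$-dependence of the weight calibrated to make this work). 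Thus the desired estimate on $A_k$ is, term by term, the classical unweighted estimate on the fixed bounded Lipschitz domain $A_0$ applied to $\tilde u$, with a constant independent of $k$. Undoing the scaling gives the inequality on each $A_k$ with a uniform constant, and summing over $k$ — the stated hypotheses being exactly what makes the relevant series converge — together with the bounded overlap of the $A_k$ yields the global statement (with $\sup_k$ replacing the sum when $p=\infty$). This is also where the index restrictions originate: $\delta_2<\delta_1$ in (i) makes $\sum_k 2^{-k(\delta_1-\delta_2)p}$ summable, the case distinction in (iv) according to whether $kp<n$ or $kp>n$ and the constraint $0<\alpha\le k-n/p\le 1$ in (v) single out precisely the ranges in which the classical embedding holds on $A_0$, and the restriction $4|x-y|\le r(x)$ appearing in $\|\cdot\|_{C^{0,\alpha}_\delta}$ forces $x$ and $y$ into a common enlarged annulus, so that the rescaling controls the weighted H\"older seminorm as well.

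Two cases merit separate attention. For the interpolation inequality (iii) the parameter $\epsilon$ must be introduced at the level of the fixed annulus $A_0$: applying the classical interpolation inequality with parameter $\epsilon$ to $\tilde u$ on $A_0$ and carrying the derivative-generated powers of $2^k$ back through the scaling, the $\epsilon$-terms reassemble into $\epsilon\|u\|_{2,p,\delta}$ and the remainders into $C(\epsilon)\|u\|_{p,\delta}$, uniformly in $k$. The Poincar\'e inequality (vi) is genuinely different: rather than rescaling, one uses that $u\in C^\infty_c$ vanishes for large $r$, writes $u(r\omega)=-\int_r^\infty\partial_s u(s\omega)\,ds$ along each ray, and applies a one-dimensional Hardy inequality followed by integration over the sphere; the hypothesis $\delta<0$ is exactly the condition under which the relevant Hardy constant is finite (equivalently, the condition that $W^{1,p}_\delta$ admits no nonzero function that is locally constant near infinity). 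I expect the main difficulty to be organisational rather than conceptual: setting up once and for all the dyadic decomposition and the weight-to-scaling dictionary so that (i)--(v) drop out with the stated sharp ranges, and separately isolating the Hardy estimate underpinning (vi).
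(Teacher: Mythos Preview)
Your proof sketch is correct and follows the standard approach to these weighted inequalities: reduce to a Euclidean exterior region, decompose into dyadic annuli, rescale each annulus to a fixed one so that the weighted norm becomes a scaled unweighted norm, apply the classical inequality there, and sum; with the Poincar\'e inequality handled separately via a radial Hardy estimate. This is essentially the argument given in the original source \cite{AF}.

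However, the paper itself does not prove this theorem at all. It is introduced with the sentence ``We now quote directly, the weighted Sobolev-type inequalities from \cite{AF}'' and is stated without proof, serving purely as a reference list of tools for the later sections. So there is no proof in the paper to compare against; your proposal supplies the argument that the paper outsources to the reference.
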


\section{The Constraint Submanifold}\label{Sconstraint}
The constraint equations -- defining the constraint map, $\Phi$, for sufficiently smooth data -- are given by
\begin{alignat}{3}
\Phi_0(g,A,\pi,\varepsilon)&=(\frac{1}{2}(\pi^k_k)^2-\pi^{ij}\pi_{ij}-2(E^k_a E_k^a+B^k_a B^a_k))g^{-1/2}+R\sqrt{g} &&  \,\nonumber\\
&=(\frac{1}{2}(\pi^k_k)^2-\pi^{ij}\pi_{ij}-(\frac{1}{8}\varepsilon^k_a \varepsilon_k^a+2B^k_a B^a_k))g^{-1/2}+R\sqrt{g} &&=T_{00} \label{constraint1} \\
\Phi_i(g,A,\pi,\varepsilon)&=2\nabla^j\pi_{ij}-\varepsilon^j_a(\nabla_i A^a_j-\nabla_j A^a_i+C^a_{bc}A^b_iA^c_j) &&=T_{0i}\label{constraint2}\\
\Phi_a(g,A,\pi,\varepsilon)&=-\partial_j\varepsilon^j_a-C^c_{ab}A^b_j\varepsilon^j_c &&=j_a\label{gauss}
\end{alignat}
where $(T_{\mu 0},j_a)$ is some prescribed source. The quantity $B^i_a:=\frac{1}{2}\epsilon^{ijk}(\partial_j A_{ak}-\partial_k A_{aj} +C_{abc}A^b_j A^c_k)$ is the Yang-Mills magnetic field, as viewed by a Gaussian normal set of observers; $C^a_{bc}=C^a{}_{bc}$ are the structure constants of $\mathfrak{g}$ and $\epsilon^{ijk}$ (resp. $\epsilon_{ijk}$) is the completely antisymmetric tensor density with weight $1$ (resp. $-1$). Also note, since $\varepsilon$ is a vector density, we have $\nabla\cdot\varepsilon=\partial\cdot\varepsilon=\onabla\cdot\varepsilon$. It should be noted that the quantity $\varepsilon=-4E$ differs from the canonical momentum of Arms \cite{armsEYM} by a factor of 4, and is the negative of that used by Sudarsky and Wald \cite{SW1}; we will alternate between using $E$ and $\varepsilon$, wherever it is convenient. As usual, we have used natural units where $c=G=1$ and we have set the coupling constant to $\sqrt{4\pi}$ to agree with regular electromagnetic theory; specifically, our constraints are derived from the action
\begin{equation}
S=\int_\mathcal{M}(R-|F|^2).
\end{equation}
Note, this differs from the regular action by a factor of $16\pi$. If we were to use a different coupling constant, a factor would be present in the $|F|^2$ term, however this makes no difference to the arguments presented.

Since we are following arguments from \cite{phasespace}, it will be useful to define the pure gravitational constraint map
\begin{equation}
\Psi(g,\pi):=\begin{bmatrix}\Phi_0(g,A,\pi,\varepsilon)+2(E^k_a E_k^a+B^k_a B^a_k)g^{-1/2}\\\Phi_i(g,A,\pi,\varepsilon)+\varepsilon^j_a(\nabla_i A^a_j-\nabla_j A^a_i+C^a_{bc}A^b_iA^c_j)\end{bmatrix}=\begin{bmatrix}\Psi_0(g,\pi)\\\Psi_i(g,\pi)\end{bmatrix}.
\end{equation}

Throughout, we use $c$ or $C$ to denote some constant depending on $(\mathcal{M},\mathring{g})$ and other fixed parameters, which may vary from line to line. Where appropriate, we will make explicit the parameters on which these constants depend.

We first show $\Phi:\mathcal{F}\to\mathcal{N}^*$ is a smooth map of Hilbert manifolds.
\begin{proposition}\label{prop1} Suppose $(g,A,\pi,\varepsilon)\in\mathcal{G}^+_\lambda\times\mathcal{A}\times\mathcal{K}\times\mathcal{E}\subset\mathcal{F}$ for some fixed $\lambda>0$, then there exists a constant $c=c(\lambda)$ such that
\begin{align}
\left\|\Phi_0(g,A,\pi,\varepsilon)\right\|_{2,-5/2}&\leq c(1+\left\|g-\mathring{g}\right\|^2_{2,2,-1/2}+\left\|\pi\right\|^2_{1,2,-3/2}+\left\|\varepsilon\right\|^2_{1,2,-3/2}+\left\|A\right\|^4_\mathcal{A})\label{Phi1}\\
\left\|\Phi_i(g,A,\pi,\varepsilon)\right\|_{2,-5/2}&\leq c(\|\mathring{\nabla}\pi\|_{2,-5/2}+\|\mathring{\nabla}g\|_{1,2,-3/2}\left\|\pi\right\|_{1,2,-3/2}+\left\|\varepsilon\right\|_{1,2,-3/2}(1+\left\|A\right\|^2_{2,2,-1/2}))\label{Phi2}\\
\left\|\Phi_a(g,A,\pi,\varepsilon)\right\|_{2,-5/2}&\leq c\left\|\varepsilon\right\|_\mathcal{E}(1+\left\|A\right\|_\mathcal{A})\label{Phi3}
\end{align}
\end{proposition}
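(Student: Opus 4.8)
The plan is to treat each component of $\Phi$ as a polynomial in the fields $g-\mathring{g}$, $A$, $\pi$, $\varepsilon$ and their first (and, for $g$, second) derivatives, with coefficients that are smooth functions of $g$; on $\mathcal{G}^+_\lambda$ the tensors $g$, $g^{-1}$, $\sqrt{g}$ and $g^{-1/2}$ are pointwise comparable to the corresponding background objects, so every such coefficient lies in $L^\infty_0$ with norm at most $c(\lambda)$. The estimates then follow by bounding each monomial using Theorem~\ref{ineqs}, principally H\"older (\ref{holder}), the Sobolev embeddings (\ref{sobolev})--(\ref{sobolev2}) and the inclusion (\ref{i}). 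First I would record, once and for all, the embeddings used repeatedly: since $n=3$, we have $g-\mathring{g},A_{\mathfrak{z}}\in L^\infty_{-1/2}$ and $A_{\mathfrak{k}}\in L^\infty_{-3/2}$ by (\ref{sobolev2}); $\mathring{\nabla}(g-\mathring{g}),\mathring{\nabla}A_{\mathfrak{z}},\pi,\varepsilon\in L^6_{-3/2}$ and $\mathring{\nabla}A_{\mathfrak{k}}\in L^6_{-5/2}$ by (\ref{sobolev}); and $\mathring{\nabla}^2(g-\mathring{g}),\mathring{\nabla}\pi,\mathring{\nabla}\varepsilon\in L^2_{-5/2}$ directly, all with the obvious norm bounds. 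A key structural point is that the Lie bracket annihilates the centre, so $[A,A]=[A_{\mathfrak{k}},A_{\mathfrak{k}}]$ and $[A,\varepsilon]=[A_{\mathfrak{k}},\varepsilon]$; this is precisely what lets every nonlinear Yang--Mills term inherit the faster decay of $A_{\mathfrak{k}}$, a rate that is in fact forced if $\Phi$ is to land in $\mathcal{N}^*$ at all (for instance $[A_{\mathfrak{k}},A_{\mathfrak{k}}]^2\in L^\infty_{-6}\subset L^2_{-5/2}$, whereas under the slower rate this quartic term would only lie in $L^\infty_{-2}\not\subset L^2_{-5/2}$).

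For (\ref{Phi1}) I would split $\Phi_0$ into an algebraic part and the scalar-curvature part. Modulo $L^\infty_0$ coefficients the momentum terms are of the form $\pi\ast\pi$ and $\varepsilon\ast\varepsilon$, and since $\pi,\varepsilon\in L^6_{-3/2}$ H\"older places them in $L^3_{-3}\hookrightarrow L^2_{-5/2}$ with bounds $c\|\pi\|_{1,2,-3/2}^2$ and $c\|\varepsilon\|_{1,2,-3/2}^2$. For $B^k_aB^a_k$ write $B\sim\mathring{\nabla}A+[A_{\mathfrak{k}},A_{\mathfrak{k}}]$, so $B\ast B$ decomposes into $(\mathring{\nabla}A)\ast(\mathring{\nabla}A)\in L^3_{-3}$, $(\mathring{\nabla}A)\ast A_{\mathfrak{k}}^2$ and $A_{\mathfrak{k}}^4\in L^\infty_{-6}$, each embedding into $L^2_{-5/2}$ and bounded by a power of $\|A\|_{\mathcal{A}}$ of degree at most four. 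For the curvature I would use the standard expansion of $R\sqrt{g}$ relative to the background: it equals $R[\mathring{g}]\sqrt{\mathring{g}}$, plus a smooth-in-$g$ coefficient contracted with $\mathring{\nabla}^2(g-\mathring{g})$, plus a smooth-in-$g$ coefficient contracted with $\mathring{\nabla}(g-\mathring{g})\otimes\mathring{\nabla}(g-\mathring{g})$, plus terms whose coefficients are built from the curvature and covariant derivatives of $\mathring{g}$ and hence vanish outside $\mathcal{M}_0$ and which are linear in $g-\mathring{g},\mathring{\nabla}(g-\mathring{g})$. The first piece is smooth and compactly supported, contributing the $1$; the second is in $L^2_{-5/2}$ with norm $\le c\|g-\mathring{g}\|_{2,2,-1/2}$; the third is in $L^3_{-3}\hookrightarrow L^2_{-5/2}$ with norm $\le c\|g-\mathring{g}\|_{2,2,-1/2}^2$ by (\ref{holder}), (\ref{i}); and the compactly supported pieces are controlled by $c\|g-\mathring{g}\|_{2,2,-1/2}$. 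Summing and absorbing lower powers of the norms into the constant yields (\ref{Phi1}).

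Estimates (\ref{Phi2}) and (\ref{Phi3}) run along the same lines with fewer terms. In $\Phi_i$, write $\nabla^j\pi_{ij}=\mathring{\nabla}^j\pi_{ij}+(\Gamma-\mathring{\Gamma})\ast\pi$, with $\Gamma-\mathring{\Gamma}$ an $L^\infty_0$-coefficient times $\mathring{\nabla}(g-\mathring{g})$; the first term is the $\|\mathring{\nabla}\pi\|_{2,-5/2}$ contribution, and H\"older (with $\mathring{\nabla}(g-\mathring{g})\in L^6_{-3/2}$, $\pi\in L^3_{-1}$) together with (\ref{sobolev}) gives the second as $c\|\mathring{\nabla}g\|_{1,2,-3/2}\|\pi\|_{1,2,-3/2}$. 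The Christoffel symbols cancel in the antisymmetrised derivative, so $F\sim\mathring{\nabla}A+[A_{\mathfrak{k}},A_{\mathfrak{k}}]$, and pairing with $\varepsilon$ gives $\varepsilon\ast\mathring{\nabla}A\in L^2_{-5/2}$ (using $\mathring{\nabla}A\in L^6_{-3/2}$ with norm $\le c\|A\|_{2,2,-1/2}$) and $\varepsilon\ast A_{\mathfrak{k}}^2\in L^2_{-5/2}$ (using $A_{\mathfrak{k}}^2\in L^\infty_{-1}$); absorbing the linear term via $\|A\|_{2,2,-1/2}\le\tfrac12(1+\|A\|_{2,2,-1/2}^2)$ produces the factor $\|\varepsilon\|_{1,2,-3/2}(1+\|A\|_{2,2,-1/2}^2)$. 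Finally $\Phi_a=-\partial_j\varepsilon^j_a-[A_{\mathfrak{k}},\varepsilon]_a$: the divergence term is bounded by $\|\varepsilon\|_{\mathcal{E}}$ and $[A_{\mathfrak{k}},\varepsilon]$ by $c\|A\|_{\mathcal{A}}\|\varepsilon\|_{\mathcal{E}}$ via H\"older, giving (\ref{Phi3}). I expect the only genuinely delicate step to be the curvature term in $\Phi_0$ — verifying that $g\mapsto R\sqrt{g}$ decomposes so that its top-order part really is $\mathring{\nabla}^2(g-\mathring{g})$ up to bounded coefficients, which is where following \cite{phasespace} is essential — with everything else amounting to careful bookkeeping of weights, the main subtlety there being the consistent use of the $\mathfrak{z}\oplus\mathfrak{k}$ split so that each nonlinear Yang--Mills term picks up the decay of $A_{\mathfrak{k}}$.
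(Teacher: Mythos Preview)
Your proof is correct and follows essentially the same strategy as the paper: split off the pure-gravity piece (which the paper simply quotes from \cite{phasespace} rather than re-deriving), exploit the $\mathfrak{z}\oplus\mathfrak{k}$ decomposition so that every Lie bracket inherits the faster decay of $A_{\mathfrak{k}}$, and bound each resulting monomial via weighted H\"older and Sobolev inequalities. The only cosmetic difference is that the paper routes its Yang--Mills estimates through $L^4$ spaces (using $\|u^2\|_{2,\delta}\le c\|u\|_{1,2,\delta/2}^2$) rather than your $L^\infty$ embeddings for $A_{\mathfrak{k}}$, but both choices land in $L^2_{-5/2}$ with the stated polynomial dependence on the norms.
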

\begin{proof}
From \cite{phasespace} (Prop. 3.1) we have the bounds,
\begin{align}
\left\|\Psi_0(g,\pi)\right\|_{2,-5/2}&\leq c(1+\left\|g-\mathring{g}\right\|^2_{2,2,-1/2}+\left\|\pi\right\|^2_{1,2,-3/2})\\
\left\|\Psi_i(g,\pi)\right\|_{2,-5/2}&\leq c(\|\mathring{\nabla}\pi\|_{2,-5/2}+\|\mathring{\nabla}g\|_{1,2,-3/2}\left\|\pi\right\|_{1,2,-3/2})
\end{align}
thus we need only to bound the Yang-Mills terms.

Applying (\ref{holder}) and (\ref{sobolev}), we have the inequality
\begin{equation}\label{squaredineq}
\|u^2\|_{2,\delta}\leq \|u\|_{4,\delta/2}^2\leq c\left\|u\right\|_{1,2,\delta/2}^2.
\end{equation}
Using (\ref{squaredineq}), (\ref{i}), (\ref{holder}) and (\ref{sobolev}), we have
\begin{align*}
\|B^2\|_{2,-5/2}&\leq c(\| \mathring{\nabla}A\|_{1,2,-3/2}+\left\|[A,A]\right\|_{1,2,-3/2})^2\\
&\leq c(\|A\|_{2,2,-1/2}+\|A_\mathfrak{k}^2\|_{1,2,-3/2})^2\\
&\leq c(\|A\|_{2,2,-1/2}+\|A_\mathfrak{k}^2\|_{2,-3/2}+\|\mathring{\nabla}(A_\mathfrak{k})A_\mathfrak{k}\|_{2,-5/2})^2\\
&\leq c(1+\|A_\mathfrak{z}\|_{2,2,-1/2}+\|A_\mathfrak{k}\|^2_{1,2,-3/2}+\|\mathring{\nabla}A_\mathfrak{k}\|_{4,-5/4}\|A_\mathfrak{k}\|_{4,-5/4})^2\\
&\leq c(1+\|A_\mathfrak{z}\|_{2,2,-1/2}+\|A_\mathfrak{k}\|^2_{1,2,-3/2}+\|\mathring{\nabla}A_\mathfrak{k}\|_{1,2,-5/2}\|A_\mathfrak{k}\|_{1,2,-3/2})^2\\
&\leq c(1+\|A\|_\mathcal{A}^4)
\end{align*}
The $E^2$ term is clearly taken care of by (\ref{squaredineq}); combining these bounds with the definition of $\mathcal{G}^+_\lambda$, we have established (\ref{Phi1}).

Similarly, we have
\begin{align}
\|\Phi_i(g,A,\pi,\varepsilon)-\Psi_i(g,\pi)\|_{2,-5/2}&\leq c(\|\varepsilon\mathring{\nabla}A\|_{2,-5/2}+\|\varepsilon A_\mathfrak{k}^2\|_{2,-5/2})\\
&\leq c(\|\varepsilon\|_{4,-5/4}\|\mathring{\nabla}A\|_{4,-5/4}+\|\varepsilon\|_{4,-3/2}\|A_\mathfrak{k}\|^2_{8,-1/2})\\
&\leq c\|\varepsilon\|_{1,2,-3/2}(1+\|A\|_{2,2,-1/2}^2),
\end{align}
which establishes (\ref{Phi2}).

Finally we have
\begin{align}
\|\Phi_a(g,A,\pi,\varepsilon)\|_{2,-5/2}&\leq c(\|\varepsilon\|_{1,2,-3/2}+\|A_\mathfrak{k}\|_{4,-5/4}\|\varepsilon\|_{4,-5/4})\\
&\leq c(\|\varepsilon\|_{1,2,-3/2}+\|A_\mathfrak{k}\|_{1,2,-3/2}\|\varepsilon\|_{1,2,-3/2})
\end{align}
giving (\ref{Phi3}), and thus completing the proof.
\end{proof}

\begin{corollary}\label{smoothness}
$\Phi:\mathcal{F}\rightarrow\mathcal{N}^*$ is smooth.
\end{corollary}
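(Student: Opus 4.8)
The plan is to deduce smoothness of $\Phi$ from the polynomial (in fact quadratic, modulo the metric-dependent terms) structure of the constraint map, using the standard fact that a multilinear map between Banach spaces that is bounded is smooth, together with the bounds of Proposition \ref{prop1} to control the lone non-polynomial ingredient. Concretely, I would first observe that each component $\Phi_0,\Phi_i,\Phi_a$ is built from the fields $(g,A,\pi,\varepsilon)$ and the background connection $\onabla$ by algebraic operations (tensor contractions, the structure constants $C^a_{bc}$, the $\epsilon^{ijk}$ density), differentiation by $\onabla$, and the single non-polynomial operation $g\mapsto g^{-1}$ together with the scalar densities $g^{\pm 1/2}$ and the scalar curvature $R\sqrt{g}$ of $g$. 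The key reduction is therefore to handle these metric-dependent pieces and then invoke multilinearity for everything else.

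The steps, in order, would be: (1) Note that $\Phi$ extends to an everywhere-defined map on the open set $\mathcal{G}^+_\lambda\times\mathcal{A}\times\mathcal{K}\times\mathcal{E}$ and, since $\mathcal{G}^+=\bigcup_{0<\lambda<1}\mathcal{G}^+_\lambda$ is an increasing open cover of $\mathcal{G}^+\times\mathcal{A}\times\mathcal{K}\times\mathcal{E}$, it suffices to prove smoothness on each $\mathcal{G}^+_\lambda\times\mathcal{A}\times\mathcal{K}\times\mathcal{E}$. (2) Invoke \cite{phasespace} (as in the proof of Proposition \ref{prop1}, where the bounds on $\Psi$ are quoted) for the smoothness of the pure gravitational map $\Psi:\mathcal{G}^+_\lambda\times\mathcal{K}\to\mathcal{N}^*$; in particular this already gives that the maps $g\mapsto g^{-1/2}$, $g\mapsto R\sqrt{g}$, and $g\mapsto \nabla^j$ (acting on appropriate tensors) are smooth into the relevant weighted spaces on $\mathcal{G}^+_\lambda$. (3) Write $\Phi=\Psi\circ(\text{projection}) + (\Phi-\Psi)$, and treat the Yang-Mills correction $\Phi-\Psi$ term by term. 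Each such term is a finite sum of expressions of the form $(\text{smooth function of }g)\cdot M(A,\dots,A,\varepsilon,\dots,\varepsilon,\onabla A,\dots)$ where $M$ is a fixed multilinear (with constant coefficients given by $C^a_{bc}$, $\gamma$, $\epsilon$) map; for instance $E^k_aE^a_k g^{-1/2}$, $B^k_aB^a_k g^{-1/2}$, $\varepsilon^j_a(\onabla_iA^a_j-\onabla_jA^a_i+C^a_{bc}A^b_iA^c_j)$, and $C^c_{ab}A^b_j\varepsilon^j_c$. (4) For each, verify that the underlying multilinear map is bounded from the relevant product of factor spaces ($\mathcal{A}$, $\mathcal{E}$, or their images under $\onabla$, with the $\mathfrak{z}/\mathfrak{k}$ splitting of $\mathcal{A}$ kept track of) into $\mathcal{N}^*=L^2_{-5/2}(\cdots)$ — but this boundedness is exactly what the chain of Hölder/Sobolev estimates inside the proof of Proposition \ref{prop1} establishes (those displays are in fact estimates on the multilinear pieces, before taking the $\mathcal{G}^+_\lambda$-dependence out front). (5) Conclude: a bounded multilinear map between Banach spaces is $C^\infty$, composition and multiplication of smooth maps is smooth, and a product (over $\lambda$) of locally smooth maps agreeing on overlaps is smooth; hence $\Phi$ is smooth.

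The main obstacle is bookkeeping rather than conceptual: one must check that every multilinear constituent of $\Phi-\Psi$ lands boundedly in $L^2_{-5/2}$, which forces careful attention to the asymmetric decay conditions encoded in $\mathcal{A}=W^{2,2}_{-1/2}(\cdots\otimes\mathfrak{z})\oplus W^{2,2}_{-3/2}(\cdots\otimes\mathfrak{k})$ — in particular, the worst-decaying $\mathfrak{z}$-part of $A$ must never appear quadratically or multiplied by $\onabla A$ in a way that breaks the weight count (it does not, precisely because $[A,A]$ only sees $\mathfrak{k}$, as the centre drops out of the bracket), and the derivative terms $\onabla A$, $\onabla\varepsilon$ only live in $W^{1,2}$, so one must use the interpolation/Sobolev inequalities (\ref{interpolation}), (\ref{sobolev}) to trade regularity for decay. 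A secondary subtlety is the smooth dependence on $g$ of the scalar-curvature and $g^{\pm1/2}$ factors, including the need to differentiate $\nabla^j\pi_{ij}$ in $g$; all of this is already contained in the cited results of \cite{phasespace}, so I would simply quote it. Once these boundedness facts are in hand — and Proposition \ref{prop1} essentially records them — smoothness follows formally, so the corollary is genuinely a corollary.
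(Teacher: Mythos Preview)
Your proposal is correct and follows essentially the same approach as the paper: both arguments rest on the observation that $\Phi$ is polynomial in the variables $(g,g^{-1},\sqrt{g},1/\sqrt{g},\onabla g,\onabla^2 g,\pi,\onabla\pi,\varepsilon,\onabla\varepsilon,A,\onabla A)$, that the non-polynomial assignments $g\mapsto g^{-1},\sqrt{g},1/\sqrt{g}$ are smooth on $\mathcal{G}^+_\lambda$, and that locally bounded polynomial (multilinear) maps between Banach spaces are smooth, with the requisite boundedness supplied by Proposition~\ref{prop1}. The only organisational difference is that the paper treats $\Phi$ at once as a polynomial in all twelve variables (citing Hille--Phillips for the smoothness of locally bounded polynomials), whereas you split off $\Psi$ and quote its smoothness from \cite{phasespace} before handling the Yang--Mills remainder term by term; this is cosmetic.
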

\begin{proof}
It can be seen from Proposition \ref{prop1}, that $\Phi:\mathcal{F}\rightarrow\mathcal{N}^*$ is locally bounded. We note, $R$ can be expressed as a polynomial function in $g$, $g^{-1}$, $\mathring{\nabla}g$ and $\mathring{\nabla}^2g$, and therefore the constraint map can be considered a polynomial function in 12 variables,
\begin{equation}
\overline{\Phi}(g,g^{-1},\sqrt{g},1/\sqrt{g},\mathring{\nabla}g,\mathring{\nabla}^2g,\pi,\mathring{\nabla}\pi,\varepsilon,\mathring{\nabla}\varepsilon,A,\mathring{\nabla}A)=\Phi(g,A,\pi,\varepsilon).
\end{equation}
For positive definite matrices, the maps $g\mapsto\mathring{\nabla}g$, $g\mapsto\mathring{\nabla}^2g$, $A\mapsto\mathring{\nabla}A$, $g\mapsto\sqrt{g}$, etc. are smooth. Further, locally bounded polynomial functions are smooth (in the sense of Fr\'echet differentiability) (see \cite{hillephillips}, chapter 26), it follows that $\Phi$ is a smooth map of Hilbert manifolds.
\end{proof}

The linearisation of $\Phi$ at a point $G=(g,A,\pi,\varepsilon)\in\mathcal{F}$
\begin{equation}
D\Phi_G:\mathcal{G}\times\mathcal{A}\times\mathcal{K}\times\mathcal{E}\to \mathcal{N}^*\nonumber
\end{equation}
is given by
\begin{alignat}{3}
D\Phi_{0\,G}(h,b,p,f)&=&&\,(\pi^k_k\pi^{ij}-2\pi^{ik}\pi^j_k-2(E^i_a E^{aj}+B^i_aB^{aj}))h_{ij}g^{-1/2}\nonumber\\
\,&\,&&+(\frac{1}{2}\pi^{ij}\pi_{ij}-\frac{1}{4}(\pi^k_k)^2+(E^k_a E^a_k+B^k_aB^a_k))h^j_j g^{-1/2}\nonumber\\
\,&\,&&+(\frac{1}{2}h^k_k R-\Delta h^k_k+\nabla^i\nabla^j h_{ij}-R^{ij}h_{ij})\sqrt{g}-4\epsilon^{ijk}(\nabla_j b^a_k+C^a_{bc}A^b_jb^c_k)B_{ai}g^{-1/2}\nonumber\\
\,&\,&&+(p^k_k\pi^j_j-2\pi^{ij}p_{ij})g^{-1/2}-\frac{1}{4}f^i_a\varepsilon^a_ig^{-1/2}\label{Dphi0}\\
D\Phi_{i\,G}(h,b,p,f)&=&&\,2\nabla_j(\pi^{jk}h_{ik})-\pi^{jk}\nabla_i h_{jk}-\varepsilon^j_a(\nabla_i b^a_j-\nabla_j b^a_i + C^a_{bc}(A^b_ib^c_j+b^b_iA^c_j))\nonumber\\
\,&\,&&+2\nabla_jp^j_i-f^j_a(\nabla_iA^a_j-\nabla_jA_i^a+C^a_{bc}A^b_iA^c_j)\label{Dphii}\\
D\Phi_{a\,G}(h,b,p,f)&=&&-(C^c_{ab}\varepsilon^j_c b^b_j + \nabla_jf^j_a + C^c_{ab}f^j_cA^b_j).\label{Dphia}
\end{alignat}
See \cite{ehlers} (and references therein) for computations.

The $L^2$ adjoint is simply computed by integration by parts and throwing out the boundary terms:
\begin{alignat}{3}
D\Phi_{g\,G}^*(N,X,V)&=&&\,N\Big{(}\pi^k_k\pi^{ij}-2\pi^{ik}\pi^j_k-2(E^i_a E^{aj}+B^i_aB^{ai})\nonumber\\
\,&\,&&+\left\{\frac{1}{2}\pi^{kl}\pi_{kl}-\frac{1}{4}(\pi^k_k)^2+(E^k_a E^a_k+B^k_aB_k^a)\right\} g^{ij}\Big{)}g^{-1/2}\nonumber\\
\,&\,&&+\left\{ N(\frac{1}{2}Rg^{ij}-R^{ij})+\nabla^i\nabla^j N-g^{ij}\nabla^k\nabla_k N\right\}\sqrt{g}+\mathcal{L}_X\pi^{ij}\label{adjointg}\\
D\Phi_{A\,G}^*(N,X,V)&=&&-4\epsilon^{ijk}\left\{\nabla_j(NB_{ak} g^{-1/2})+C_{abc}NB^c_kA^b_j g^{-1/2}\right\}+\mathcal{L}_X \varepsilon^i_a\nonumber\\
\,&\,&&-X^i(\nabla_j\varepsilon^j_a+C^c_{ab}A^b_j\varepsilon^j_c)+X^jC^c_{ab}A^b_j\varepsilon^i_c-C^b_{ca}\varepsilon^i_bV^c\label{adjointA}\\
D\Phi_{\pi\,G}^*(N,X,V)&=&&\,N(g_{ij}\pi^k_k-2\pi_{ij})g^{-1/2}-\mathcal{L}_X g_{ij}\label{adjointpi}\\
D\Phi_{\varepsilon\,G}^*(N,X,V)&=&&-\frac{1}{4}N\varepsilon^a_ig^{-1/2}+X^j(\nabla_i A^a_j-\nabla_j A^a_i +C^a_{bc}A^b_iA^c_j)+\partial_iV^a+C^a_{bc}A^b_i V^c\nonumber\\
&=&&\,N E^a_ig^{-1/2}-\epsilon_{ijk}X^jB^{ak}+\partial_iV^a+C^a_{bc}A^b_iV^c.\label{adjointeta}
\end{alignat}
The tuple $(N,X,V)\in\mathcal{N}$ corresponds to a scalar function, vector field and $\mathfrak{g}$-valued function on $\mathcal{M}$ respectively. We will omit reference to the base point $G$ when there is no risk of confusion.

It should be noted that the map given by
\begin{equation}
T(N,X):=\begin{bmatrix}D\Phi_g^*(N,X,V)\\D\Phi_\pi^*(N,X,V)\end{bmatrix}
\end{equation}
is of the exact same form considered in \cite{phasespace}, so we quote the following two theorems.
\begin{theorem}\label{weakstrong}
If $(f_1,f_3)\in L^2_{-3/2}\times W^{1,2}_{-3/2}$ and $(N,X)\in L^2_{-1/2}$ is a weak solution of $T(N,X)=(f_1,f_3)$, then $(N,X)\in W^{2,2}_{-1/2}$ is a strong solution.
\end{theorem}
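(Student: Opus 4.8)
The plan is to treat $T(N,X) = (f_1,f_3)$ as an elliptic system for the pair $(N,X)$ and bootstrap the regularity of a weak solution by a standard difference-quotient (or mollification) argument adapted to the weighted spaces. First I would isolate the principal part of $T$: from the explicit formula (\ref{adjointg})--(\ref{adjointpi}), the top-order terms are $\nabla^i\nabla^j N - g^{ij}\nabla^k\nabla_k N$ (coming from $D\Phi_g^*$) together with $\mathcal{L}_X \pi^{ij}$, which is first order in $X$, and in $D\Phi_\pi^*$ the term $-\mathcal{L}_X g_{ij}$, which is first order in $X$ and zeroth order in $N$. So the system is genuinely of mixed order — second order in $N$, first order in $X$ — and the relevant notion of ellipticity is Agmon--Douglis--Nirenberg (Douglis--Nirenberg) ellipticity with weights $(2,1)$ on the unknowns; I would first verify that the symbol of $T$ is injective at each point, which follows from $g\in\mathcal{G}^+_\lambda$ being uniformly positive definite (this is exactly the computation underlying Bartnik's \cite{phasespace} treatment of the pure gravitational case, and here the Yang-Mills fields $A,\varepsilon$ only enter $T$ through lower-order coefficients, so they do not affect the symbol).

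Next I would establish interior elliptic estimates: on any ball $B_{2r}(x)\subset\mathcal{M}$ with $r$ comparable to $r(x)$, a weak solution of $T(N,X)=(f_1,f_3)$ with data in $L^2\times W^{1,2}$ locally lies in $W^{2,2}\times W^{2,2}$ locally, with the standard estimate $\|N\|_{W^{2,2}(B_r)} + \|X\|_{W^{2,2}(B_r)} \le C(\|f_1\|_{L^2(B_{2r})} + \|f_3\|_{W^{1,2}(B_{2r})} + \|N\|_{L^2(B_{2r})} + \|X\|_{L^2(B_{2r})})$. The coefficients of $T$ are built from $g, g^{-1}, \mathring\nabla g, \pi, \mathring\nabla\pi, A, \mathring\nabla A, \varepsilon$; by the Sobolev and Morrey inequalities of Theorem \ref{ineqs} these are at worst in $W^{1,2}_{loc}$, hence continuous (indeed H\"older) for $g$ and $A$, and the difference-quotient argument goes through provided one is careful that the first-order coefficients multiplying $\mathring\nabla N$ and $\mathring\nabla X$ are only $L^p$ for finite $p$ — this is handled exactly as in the interior regularity theory for systems with rough coefficients, absorbing the troublesome terms using the interpolation inequality (\ref{interpolation}) after first gaining half a derivative. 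The main obstacle, and the point I would spend the most care on, is precisely this: ensuring that the commutator terms generated when differentiating the equation, which involve $\mathring\nabla^2 g$, $\mathring\nabla^2 A$, $\mathring\nabla\pi$, $\mathring\nabla\varepsilon$ paired against $\mathring\nabla N$, $\mathring\nabla X$, actually lie in $L^2$ locally — this requires the sharp Sobolev embeddings in $n=3$ (e.g. $W^{1,2}\hookrightarrow L^6$, $W^{2,2}\hookrightarrow C^{0,1/2}$) and is the reason the regularity thresholds in the definitions of $\mathcal{A},\mathcal{K},\mathcal{E}$ are what they are.

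Finally I would patch the interior estimates into a global weighted estimate. Covering the end of $\mathcal{M}$ by annuli $\{2^j \le r \le 2^{j+1}\}$, rescaling each annulus to unit size, applying the interior estimate on the rescaled annulus, and then undoing the scaling converts the local $W^{2,2}$ bound into the weighted bound $\|N\|_{2,2,-1/2} + \|X\|_{2,2,-1/2} \le C(\|f_1\|_{2,-5/2} + \|f_3\|_{1,2,-3/2} + \|N\|_{2,-1/2} + \|X\|_{2,-1/2})$ — note the weight shifts by the order of the operator on each component, which is exactly compatible with the hypotheses $(f_1,f_3)\in L^2_{-5/2}\times W^{1,2}_{-3/2}$ (using $L^2_{-5/2}\subset L^2_{-3/2}$ from (\ref{i})) and with $(N,X)\in L^2_{-1/2}$ being given. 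Since the $(N,X)$-terms on the right are already controlled by hypothesis, this shows $(N,X)\in W^{2,2}_{-1/2}$, i.e. $(N,X)\in\mathcal{N}$-regularity is upgraded to $\mathcal{G}$-regularity, and that $(N,X)$ is a strong solution. Because $T$ has precisely the form treated in \cite{phasespace} — the Yang-Mills contributions to $D\Phi_g^*$ and to the full adjoint enter $T(N,X)$ only through lower-order coefficients that are continuous or in $L^p_{loc}$ for suitable $p$ — this is really a citation of Bartnik's argument with the observation that adding these coefficients does not disturb any step.
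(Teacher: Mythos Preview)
The paper does not prove this theorem at all: immediately before stating it, the paper observes that $T$ ``is of the exact same form considered in \cite{phasespace}'' and then \emph{quotes} Theorems \ref{weakstrong} and \ref{kernel} directly from Bartnik's paper without argument. Your proposal is therefore not in conflict with the paper --- you have simply gone further and sketched what Bartnik's proof actually is (Douglis--Nirenberg ellipticity of the mixed-order system, interior difference-quotient estimates with rough coefficients, and dyadic rescaling to pass to the weighted global estimate), and you correctly conclude at the end that the Yang-Mills fields enter $T$ only through lower-order coefficients and hence do not disturb any step. One small clarification: your Douglis--Nirenberg weight assignment $(2,1)$ on $(N,X)$ would by itself only yield $X\in W^{1,2}$ from $L^2$ data; the reason the theorem gives $X\in W^{2,2}_{-1/2}$ is that the hypothesis places $f_3$ in $W^{1,2}_{-3/2}$ rather than merely $L^2_{-3/2}$, so the equation $D\Phi_\pi^*(N,X)=f_3$, which reads $\mathcal{L}_X g = N(\cdot)\pi - f_3$, can be differentiated once more after $N\in W^{2,2}$ is established. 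Apart from that, your sketch is sound and matches what the cited reference does.
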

\begin{theorem}\label{kernel}
The operator $T$ has trivial kernel in $L^2_{-1/2}$.
\end{theorem}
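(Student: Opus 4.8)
Theorem~\ref{kernel} is quoted from \cite{phasespace}; the argument one would reproduce rests on three ingredients: elliptic regularity, an asymptotic decay iteration, and unique continuation. First, Theorem~\ref{weakstrong} applies: any $(N,X)\in L^2_{-1/2}$ with $T(N,X)=0$ is already in $W^{2,2}_{-1/2}$, so by Morrey's inequality~(\ref{Morrey}) it decays pointwise, $N,X=O(r^{-1/2})$ and $\onabla N,\onabla X=O(r^{-3/2})$. Writing out $T(N,X)=0$ via (\ref{adjointg})--(\ref{adjointpi}): the $\pi$-component determines $\mathcal L_X\mathring g$ (the symmetric part of $\onabla X$) pointwise in terms of $N$ and coefficients built from $g,g^{-1},\sqrt g,\pi$; the $g$-component then determines the full Hessian $\onabla^2 N$ pointwise in terms of $(N,X,\onabla X)$ and coefficients built additionally from $\onabla g$, $\onabla\pi$, and the Yang-Mills fields $E\otimes E$, $B\otimes B$. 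The point is that the Yang-Mills terms enter only as extra bounded zeroth-order coefficients and leave the principal part untouched, which is why $T$ ``has the same form'' as the gravitational operator in \cite{phasespace}; Proposition~\ref{prop1} and Theorem~\ref{ineqs} put all these coefficients in the weighted Lebesgue spaces needed below, and $(N,X)$ satisfies a (possibly overdetermined) second-order linear elliptic system on $\mathcal M$.

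Second, I would show $(N,X)$ vanishes on the asymptotic region $\mathcal M\setminus\mathcal M_0\cong\mathbb R^3\setminus\overline{B_1(0)}$. Schematically the system reads $L_0(N,X)=\mathcal R(N,X)$ with $L_0$ the constant-coefficient (flat) operator; since $g-\mathring g$, $\pi$, $A$, $\varepsilon$ and their derivatives carry extra decay, $\mathcal R$ gains a definite amount of weight over $(N,X)$, so the weighted isomorphism properties of $L_0$ improve the decay of $(N,X)$ at each step. Crucially, the homogeneous solutions of the flat Hessian/Killing operator with the relevant decay are precisely the generators of translations, boosts and rotations, which grow like $r^0$ or $r$ rather than decaying; hence no exceptional mode can appear, the iteration continues unobstructed, and $N,X=O(r^{-k})$ for every $k$. (Equivalently, once $N$ decays faster than any power, integrating the pointwise Hessian identity along rays forces $N\equiv0$ near infinity by Gr\"onwall.) With $N\equiv 0$ near infinity, the $\pi$-component gives $\mathcal L_X\mathring g=0$ there, and a near-Euclidean metric admits no nonzero decaying Killing field, so $X\equiv 0$ near infinity as well.

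Third, $(N,X)$ solves a second-order linear elliptic system whose coefficients have enough regularity for the Aronszajn unique continuation theorem, and it vanishes on the open set $\mathcal M\setminus\mathcal M_0$; unique continuation propagates the vanishing to all of $\mathcal M$, giving $\ker T=\{0\}$.

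The difficulty is not conceptual but is the low-regularity bookkeeping that pervades all three steps: the coefficients of the elliptic system (and of the radial ODE used in the last step of the asymptotic argument) are only $L^p_{\mathrm{loc}}$ with weights, since $g\in W^{2,2}_{\mathrm{loc}}$ and $\pi,\varepsilon\in W^{1,2}_{\mathrm{loc}}$, so one must work with Carath\'eodory solutions and a unique continuation principle valid at this regularity; and one must check that the boundary terms at infinity discarded in forming the $L^2$ adjoint genuinely vanish for data in $W^{2,2}_{-1/2}$ — this last point being exactly where the weights $-1/2$ on $\mathcal N$ and $-5/2$ on $\mathcal N^*$ earn their keep.
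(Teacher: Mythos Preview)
The paper does not prove this theorem at all: it is explicitly \emph{quoted} from \cite{phasespace}, as you correctly note in your first sentence. There is therefore nothing to compare against; your proposal goes well beyond what the paper supplies by sketching the argument that would have to be imported from Bartnik.

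That sketch is faithful to Bartnik's proof and to the standard mechanism (regularity bootstrap, decay iteration exploiting that the flat Hessian/Killing operator has no nontrivial decaying homogeneous solutions, then unique continuation). Your observation that the Yang-Mills fields enter $T$ only as additional zeroth-order coefficients in $D\Phi_g^*$ (the $E\otimes E$ and $B\otimes B$ terms in (\ref{adjointg})) and not at all in $D\Phi_\pi^*$ is exactly why the paper can say $T$ is ``of the exact same form considered in \cite{phasespace}'' and quote the result wholesale. One small correction: $\mathcal L_X g$ is what (\ref{adjointpi}) determines, not $\mathcal L_X\mathring g$; the difference is lower-order but you should track it in the decay iteration.
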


From this we will prove $D\Phi^*$ also has trivial kernel, but first we will need to prove an estimate.
\begin{lemma}\label{Vboundlemma}
If $\xi=(N,X,V)\in W^{2,2}_{-1/2}$ satisfies $D\Phi^*_\varepsilon(\xi)=f_4\in W^{1,2}_{-3/2}$, then
\begin{equation}
\|V\|_{2,2,-1/2}\leq c\Big{(}\|(N,X)\|_{2,2,0}+\|f_4\|_{1,2,-3/2}+\|V\|_{2,0}\Big{)}\label{Vbound},
\end{equation}
where $C$ depends on $(g,A,\pi,\varepsilon)$.
\end{lemma}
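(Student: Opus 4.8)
## Proof Proposal

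The plan is to read off from the explicit formula \eqref{adjointeta} that
$\partial_i V^a + C^a_{bc} A^b_i V^c = f_4{}^a_i - N E^a_i g^{-1/2} + \epsilon_{ijk} X^j B^{ak}$,
so that the gauge-covariant differential $\onabla_i V^a + C^a_{bc}A^b_i V^c$ of $V$ is controlled in $L^2_{-3/2}$ by the right-hand side. The first step is therefore to estimate that right-hand side: using the weighted H\"older inequality \eqref{holder} together with the Sobolev/Morrey embeddings from Theorem \ref{ineqs}, and the fact that $(N,X)\in W^{2,2}_{-1/2}\hookrightarrow L^\infty$ while $E\in\mathcal{E}$, $B\in L^2_{-5/2}$ (from Proposition \ref{prop1} and the bundle decay conditions), one gets
$\|NEg^{-1/2}\|_{2,-3/2}+\|\epsilon X B\|_{2,-3/2}\le c\,\|(N,X)\|_{2,2,0}$,
with $c$ depending on $(g,A,\pi,\varepsilon)$. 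Hence $\|\onabla V + [A,V]\|_{2,-3/2}\le c(\|(N,X)\|_{2,2,0}+\|f_4\|_{2,-3/2})$.

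The second step is to absorb the commutator term $[A,V]$. Split $A=A_\lie{z}+A_\lie{k}$; the central part $A_\lie{z}$ commutes with everything, so $[A_\lie{z},V]=0$ and only $[A_\lie{k},V]$ survives. Since $A_\lie{k}\in W^{2,2}_{-3/2}\hookrightarrow L^4_{-3/2}$ (indeed it decays faster, like $o(r^{-3/2})$), the H\"older inequality \eqref{holder} with $V\in W^{2,2}_{-1/2}\hookrightarrow L^\infty$ gives $\|[A_\lie{k},V]\|_{2,-3/2}\le c\|A_\lie{k}\|_{4,-3/2}\|V\|_{\infty,0}\le c\|A_\lie{k}\|_{2,2,-3/2}\|V\|_{2,2,-1/2}$. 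This is not yet good enough because it is linear in $\|V\|_{2,2,-1/2}$, the quantity we want to bound; the trick is that the constant $c\|A_\lie{k}\|_{2,2,-3/2}$ is \emph{not} small in general, so instead one uses an interpolation/exhaustion argument: restrict to the region $\mathcal{M}\setminus B_\rho$ where $\|A_\lie{k}\|$ in the relevant weighted norm is as small as we like (possible since $A_\lie{k}\in W^{2,2}_{-3/2}$ with $\delta=-3/2<0$), absorb the commutator term there, and handle the compact remainder $B_\rho$ by the interior elliptic estimate for the flat connection $\onabla$, picking up the lower-order term $\|V\|_{2,0}$ on a fixed compact set. Concretely: $\|\onabla V\|_{2,-3/2}\le \|\onabla V+[A_\lie{k},V]\|_{2,-3/2}+\|[A_\lie{k},V]\|_{2,-3/2}$, and on the exterior the last term is $\le\tfrac12\|V\|_{2,2,-1/2}$ while on $B_\rho$ it is $\le c\|V\|_{2,0}$.

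The third step upgrades the first-order bound on $\onabla V$ to the full $W^{2,2}_{-1/2}$ estimate on $V$. Differentiate the defining relation once more: $\onabla$ of the identity above expresses $\onabla^2 V$ in terms of $\onabla f_4$, $\onabla(NEg^{-1/2})$, $\onabla(\epsilon X B)$ and $\onabla([A_\lie{k},V])$; estimate each term in $L^2_{-5/2}$ using \eqref{holder}, \eqref{sobolev}, \eqref{sobolev2} and the embeddings — the terms involving $\onabla N$, $\onabla X$ land in $W^{1,2}_{-3/2}$ hence in $L^4$, $\onabla E$, $\onabla B$ are controlled by $\mathcal{E}$ and the curvature bounds, and $\onabla([A_\lie{k},V])$ produces $\onabla A_\lie{k}\cdot V + A_\lie{k}\cdot\onabla V$, both bounded by the quantities already controlled (again absorbing the small-$A_\lie{k}$ exterior piece). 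Then apply the weighted Poincar\'e inequality \eqref{poincare} (valid since $\delta=-1/2<0$) to pass from $\|\onabla V\|_{2,-3/2}+\|\onabla^2 V\|_{2,-5/2}$ to the full norm $\|V\|_{2,2,-1/2}=\|V\|_{2,-1/2}+\|\onabla V\|_{2,-3/2}+\|\onabla^2 V\|_{2,-5/2}$. Collecting constants yields \eqref{Vbound}.

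I expect the main obstacle to be the second step: cleanly absorbing the $[A_\lie{k},V]$ term. The naive H\"older bound is linear in the target norm but with a constant that need not be below $1$, so one genuinely needs the exterior-smallness-plus-compact-interior decomposition (this is exactly why the decay rate $-3/2$ on $A_\lie{k}$, faster than $-1/2$, was built into the definition of $\mathcal{A}$). Care is also needed that the "compact interior" elliptic estimate for $\onabla$ — standard Calder\'on--Zygmund on $B_\rho$ — is applied with the correct boundary handling, but since we only need an a priori estimate with the harmless term $\|V\|_{2,0}$ on the right, a cutoff argument suffices and no boundary conditions on $V$ are required.
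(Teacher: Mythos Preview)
Your argument is correct but you take a different route from the paper when absorbing the commutator term $[A_\lie{k},V]$. You propose a spatial decomposition: on an exterior region the weighted norm of $A_\lie{k}$ is small (since $A_\lie{k}\in W^{2,2}_{-3/2}$), so the term can be absorbed there, while on the compact core it contributes only $c\|V\|_{2,0}$. The paper instead exploits the extra decay of $A_\lie{k}$ by a \emph{weight shift}: it estimates the $[A_\lie{k},V]$-type terms with $V$ placed at weight $0$ rather than $-1/2$ (e.g.\ $\|\onabla A_\lie{k}\|_{4,-5/2}\|V\|_{4,0}$ and $\|A_\lie{k}\|_{\infty,-3/2}\|\onabla V\|_{2,-1}$), arriving at
\[
\|V\|_{2,2,-1/2}\le c\big(\|(N,X)\|_{2,2,0}+\|f_4\|_{1,2,-3/2}+\|V\|_{1,2,0}\big)
\]
after Poincar\'e, and then invokes the interpolation inequality \eqref{interpolation} to split $\|V\|_{1,2,0}\le \epsilon\|V\|_{2,2,0}+C(\epsilon)\|V\|_{2,0}$ and absorb the small $\epsilon\|V\|_{2,2,0}\le c\epsilon\|V\|_{2,2,-1/2}$ piece. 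Both methods use the same underlying fact---that $A_\lie{k}$ decays at rate $-3/2$ rather than $-1/2$---but the paper's version avoids cutoffs and region-splitting entirely, handling the absorption in one clean interpolation step. Your approach is more robust in situations where such an interpolation inequality is unavailable, at the cost of some extra bookkeeping with the cutoff; here, since \eqref{interpolation} is already in hand, the paper's route is shorter.
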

\begin{proof}
From (\ref{adjointeta}) we have
\begin{equation}
\partial_iV^a=D\Phi_\varepsilon^*(\xi)-NE^a_ig^{-1/2}+\epsilon_{ijk}X^jB^{ak}-C^a_{bc}A^b_iV^c\label{diffV}
\end{equation}
For brevity, let $\zeta=(N,X)\in W^{2,2}_{-1/2}$ and $\beta=(E,B)\in W^{1,2}_{-3/2}$. By differentiating (\ref{diffV}) and applying the inequalities from Theorem \ref{ineqs}, we have
\begin{align*}
\|\mathring{\nabla}^2V\|_{2,-5/2} &\leq c\Big{(}\|f_4\|_{1,2-3/2}+\|\onabla(\zeta)\beta\|_{2,-5/2}+\|\onabla(\beta)\zeta\|_{2,-5/2}\\
&\hspace{4mm}+\|\onabla(A_\lie{k})V\|_{2,-5/2}+\|\onabla(V)A_\lie{k}\|_{2,-5/2}\Big{)}\\
&\leq c\Big{(}\|f_4\|_{1,2-3/2}+\|\onabla(\zeta)\|_{4,-1}\|\beta\|_{4,-3/2}+\|\onabla(\beta)\|_{2,-5/2}\|\zeta\|_{\infty,0}\\
&\hspace{4mm}+\|\onabla A_\lie{k}\|_{4,-5/2}\|V\|_{4,0}+\|\onabla V\|_{2,-1}\|A_\lie{k}\|_{\infty,-3/2}\Big{)}\\
&\leq c\Big{(}\|f_4\|_{1,2-3/2}+\|\onabla(\zeta)\|_{1,2,-1}\|\beta\|_{1,2,-3/2}+\|\beta\|_{1,2,-3/2}\|\zeta\|_{2,2,0}\\
&\hspace{4mm}+\|\onabla A_\lie{k}\|_{1,2,-5/2}\|V\|_{1,2,0}+\|\onabla V\|_{2,-1}\|A_\lie{k}\|_{2,2,-3/2}\Big{)}\\
&\leq c\Big{(}\|f_4\|_{1,2-3/2}+\|\beta\|_{1,2,-3/2}\|\zeta\|_{2,2,0}+\|V\|_{1,2,0}\|A_\lie{k}\|_{2,2,-3/2}\Big{)}\\
&\leq c(\|f_4\|_{1,2-3/2}+\|\zeta\|_{2,2,0}+\|V\|_{1,2,0}),
\end{align*}
where the constant, $c$, depends on $(\beta,A_\lie{k})$ in the last line. Applying the weighted Poincar\'e inequality (\ref{poincare}), we have
\begin{equation}
\|V\|_{2,2,-1/2}\leq c\Big{(}\|(N,X)\|_{2,2,0}+\|f_4\|_{1,2,-3/2}+\|V\|_{1,2,0}\Big{)}.
\end{equation}
Applying the interpolation inequality (\ref{interpolation}) to the last term on the right hand side and choosing $\epsilon$ small enough, gives us (\ref{Vbound}).
\end{proof}

\begin{theorem}\label{weakstrongYM}
If $\xi\in\L^2_{-1/2}$ is a weak solution of $D\Phi^*(\xi)=(f_1,f_2,f_3,f_4)$, where $(f_1,f_3,f_4)\in L^2_{-3/2}\times W^{1,2}_{-3/2}\times W^{1,2}_{-3/2}$, then $\xi\in W^{2,2}_{-1/2}$.
\end{theorem}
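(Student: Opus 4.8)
The plan is to bootstrap the regularity of $\xi = (N,X,V)$ by peeling off the gravitational part using Theorem~\ref{weakstrong}, then handling the Yang-Mills component $V$ via the estimate of Lemma~\ref{Vboundlemma}. First I would observe that the adjoint system $D\Phi^*(\xi) = (f_1,f_2,f_3,f_4)$ decouples in a triangular fashion: the equations $D\Phi_g^*(\xi)=f_1$ and $D\Phi_\pi^*(\xi)=f_3$ only involve $V$ through the single algebraic-looking term $-C^b_{ca}\varepsilon^i_b V^c$ appearing in $D\Phi_A^*$, but crucially the pair $(D\Phi_g^*, D\Phi_\pi^*)$ is exactly the operator $T(N,X)$ from \cite{phasespace}, which does not see $V$ at all. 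So the source terms $(f_1, f_3)$ together with $V$-dependent corrections coming from rearranging $D\Phi_A^*(\xi)=f_2$ must be assembled carefully; the cleaner route is to note that $T(N,X) = (f_1, f_3)$ holds directly with right-hand side in $L^2_{-3/2}\times W^{1,2}_{-3/2}$ once we know $\xi\in L^2_{-1/2}$ (the $V$-terms do not enter $T$). Applying Theorem~\ref{weakstrong} immediately gives $(N,X)\in W^{2,2}_{-1/2}$.

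With $(N,X)\in W^{2,2}_{-1/2}$ in hand, I would turn to recovering $V$. From the weak form of $D\Phi_A^*(\xi) = f_2$ and $D\Phi_\varepsilon^*(\xi) = f_4$ — the latter being the relation (\ref{diffV}) expressing $\partial_i V^a$ in terms of $V$, $\zeta=(N,X)$, the background fields, and $f_4$ — one sees that $\nabla V$ is controlled by $V$ itself plus already-regular data. The first step here is to upgrade $V$ from $L^2_{-1/2}$ to $W^{1,2}_{-1/2}$: differentiating (\ref{diffV}) once and running the Hölder/Sobolev estimates as in Lemma~\ref{Vboundlemma}, but at one derivative lower, shows $\onabla V \in L^2_{-3/2}$, hence (with the Poincaré inequality (\ref{poincare})) $V \in W^{1,2}_{-1/2}$. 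Once $V\in W^{1,2}_{-1/2}$, the hypotheses of Lemma~\ref{Vboundlemma} are met: $\xi\in W^{2,2}_{-1/2}$ for the gravitational part is established, $f_4\in W^{1,2}_{-3/2}$ by assumption, and the lemma's estimate (\ref{Vbound}) bounds $\|V\|_{2,2,-1/2}$ by norms that are now all finite. This yields $V\in W^{2,2}_{-1/2}$, completing the argument.

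The main obstacle I anticipate is the bootstrap in the middle: a priori $\xi\in L^2_{-1/2}$ only, so the products $\onabla(\zeta)\beta$, $\onabla(\beta)\zeta$, and the $V\cdot A_{\lie k}$ terms appearing when one differentiates (\ref{diffV}) must be shown to lie in the right weighted $L^2$ space using only this low initial regularity, and the Sobolev/Hölder exponents (the $L^4$--$L^4$ and $L^\infty$--$L^2$ splittings used in Lemma~\ref{Vboundlemma}) need to be re-checked to confirm they still close when $\zeta$ has only just been promoted to $W^{2,2}_{-1/2}$ and $V$ only to $W^{1,2}_{-1/2}$. In particular one must verify that the term $\|\onabla A_{\lie k}\|\,\|V\|$ can absorb a factor of $V$ at the lower regularity level — this is where the faster decay $A_{\lie k}\in W^{2,2}_{-3/2}$ (rather than $-1/2$) built into the definition of $\mathcal{A}$ is essential, since it is exactly what makes the relevant weighted product estimates subcritical. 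A secondary technical point is making sure the weak formulation is set up so that integration by parts against $C_c^\infty$ test functions legitimately produces the strong equation (\ref{diffV}) for $\partial_i V$ as a distributional identity before any regularity is known; this is routine but should be stated. Once these product estimates are confirmed, the remainder — two applications of Poincaré and one of interpolation (\ref{interpolation}) to absorb the lower-order $\|V\|$ term — is entirely parallel to the proof of Lemma~\ref{Vboundlemma} and needs no new ideas.
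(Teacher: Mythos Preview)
Your strategy is sound and, once repaired, would give a somewhat more direct argument than the paper's, but there is a circularity in the second half that you need to close.

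First a phrasing issue: to promote $V$ from $L^2_{-1/2}$ to $W^{1,2}_{-1/2}$ you should read (\ref{diffV}) \emph{as it stands}, not ``differentiate it once''. The right-hand side $f_4-NEg^{-1/2}+\epsilon XB-[A_{\lie k},V]$ already lies in $L^2_{-3/2}$ (using $(N,X)\in W^{2,2}_{-1/2}\subset L^\infty_{-1/2}$ and $A_{\lie k}\in L^\infty_{-3/2}$), so the distributional identity immediately gives $\onabla V\in L^2_{-3/2}$. The substantive gap is the next step: Lemma~\ref{Vboundlemma} is stated as an \emph{a priori} estimate with hypothesis $\xi=(N,X,V)\in W^{2,2}_{-1/2}$, so having only $V\in W^{1,2}_{-1/2}$ does not put you inside its scope, and the inference ``the right side of (\ref{Vbound}) is finite, hence so is the left'' is circular. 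The fix --- which your final paragraph nearly says --- is to re-run the estimates of the lemma's \emph{proof} rather than cite its statement: with $V\in W^{1,2}_{-1/2}$ the right-hand side of (\ref{diffV}) is now weakly differentiable with derivative in $L^2_{-5/2}$ (the lemma's proof only ever needs $\|V\|_{1,2,0}$ on the right), whence $\onabla^2 V\in L^2_{-5/2}$, and the Poincar\'e/interpolation tail finishes.

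The paper avoids this circularity by a different architecture: it first upgrades $V$ to $W^{2,2}_{loc}$ via a mollification argument on compact sets (writing (\ref{adjointeta}) as $\partial V=\alpha V+\beta$ with $\alpha\in W^{2,2}$, $\beta\in W^{1,2}$ locally, bounding $V_\epsilon$ uniformly in $W^{1,2}$, then iterating once for $W^{2,2}$), and only afterwards applies Lemma~\ref{Vboundlemma} to the cut-off triple $(N,X,\chi_R V)$ --- which now genuinely lies in $W^{2,2}_{-1/2}$ --- to get a uniform weighted bound and let $R\to\infty$. This separation of local regularity from weighted decay is what makes the lemma legitimately applicable. Your direct weighted bootstrap is arguably cleaner, but it only becomes a proof once you replace the black-box citation of the lemma with a repetition of its product estimates.
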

\begin{proof}
From Theorem \ref{weakstrong}, we have $(N,X)\in W^{2,2}_{-1/2}$. From (\ref{adjointeta}), it can be seen that on each $\Omega\subset\subset\mathcal{M}$, $V$ (weakly) satisfies an equation of the form
\begin{equation}
\partial_i V^a=\alpha_{ic}^aV^c+\beta_i^a
\end{equation}
with coefficients $\alpha\in W^{2,2}$, $\beta\in W^{1,2}$ on $\Omega$. The argument here is well known; $V$ can be approximated by $V_\epsilon$, where $\epsilon$ is a mollification parameter and on $\Omega$ we have
\begin{equation}
\|\onabla V_\epsilon\|_2\leq c(\|\alpha\|_\infty\|V_\epsilon\|_2+\|\beta\|_2)\leq c(\|\alpha\|_{2,2}\|V\|_2+\|\beta\|_2).
\end{equation}
Since $V_\epsilon$ is uniformly bounded in $W^{1,2}$, it follows $V_\epsilon\rightharpoonup V\in W^{1,2}$.

By differentiating (\ref{adjointeta}), it can be seen that on any $\Omega\subset\subset\mathcal{M}$, $V$ weakly satisfies an equation of the form
\begin{equation}
\partial^2_{ij}V^a=\alpha_{ic}^a\partial_j V^c+\beta_{ijc}^aV^c+\theta_{ij}^{a},
\end{equation}
where here we have $\alpha\in W^{2,2}$, $\beta\in W^{1,2}$, $\theta\in L^2$. Since we now have $V\in W^{1,2}$, the same argument will give us $V_\epsilon\rightharpoonup V\in W^{2,2}(\Omega)$. All that remains to show, is that $V$ and its weak derivatives satisfy the correct asymptotic conditions.

For any smooth cutoff function $\chi_R$ with $\chi_R\equiv 1$ on $B_R(0)$ and zero outside $B_{2R}(0)$, $\xi=(N,X,\chi_R V)$ satisfies the conditions for lemma \ref{Vboundlemma} and thus we have
\begin{equation}
\|\chi_R V\|_{2,2,-1/2}\leq C\Big{(}\|(N,X)\|_{2,2,-1/2}+\|f_4\|_{1,2,-3/2}+\|V\|_{0,2,0}\Big{)}.
\end{equation}
Once more we have a uniform bound, thus it follows $\chi_R V\rightharpoonup V$ in $W^{2,2}_{-1/2}.$
\end{proof}
Next we demonstrate that $D\Phi^*$ has trivial kernel. This amounts to saying that there are no symmetries of the data, asymptotic to zero at infinity -- this will be discussed in more detail in section \ref{Shamiltonian}.
\begin{proposition}\label{trivialkernel}
If $\xi\in L^2_{-1/2}$ satisfies $D\Phi^*(\xi)\equiv 0$ on $\mathcal{M}$, then $\xi\equiv 0$.
\end{proposition}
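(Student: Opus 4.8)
My plan is to exploit the block‑triangular structure of $D\Phi^*$. Write $\xi=(N,X,V)$. First I would apply Theorem \ref{weakstrongYM} with vanishing right‑hand sides (which trivially satisfy its hypotheses) to upgrade the weak solution to $\xi\in W^{2,2}_{-1/2}$; in particular $\xi$ is Hölder continuous and decaying. Next, inspecting (\ref{adjointg}) and (\ref{adjointpi}) one sees that neither $D\Phi_g^*$ nor $D\Phi_\pi^*$ involves $V$, so the vanishing of these two components of $D\Phi^*(\xi)$ is exactly the statement $T(N,X)=0$. Since $(N,X)\in W^{2,2}_{-1/2}\subset L^2_{-1/2}$, Theorem \ref{kernel} forces $(N,X)\equiv 0$.

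With $N=X=0$, equation (\ref{adjointeta}) collapses to $\partial_iV^a+C^a_{bc}A^b_iV^c=0$; that is, $V$ is covariantly constant with respect to the connection determined by $A$, equivalently $\mathring\nabla_i V=-[A_i,V]$ (and since $A_{\mathfrak z}$ is central, $\mathring\nabla_i V=-[(A_{\mathfrak k})_i,V]$). From $V\in W^{2,2}_{-1/2}$ and Morrey's inequality (\ref{Morrey}), $V$ is continuous; feeding the continuity of $V$ and $A$ back into this equation shows $\partial_i V$ is continuous, so in fact $V\in C^1$ and the identity holds classically (alternatively one repeats the mollification argument from the proof of Theorem \ref{weakstrongYM}).

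Finally I would use the $\Ad$‑invariance of $\gamma$: since $C_{abc}:=\gamma_{ad}C^d_{bc}$ is totally antisymmetric, one computes $\partial_i|V|^2_\gamma=2\gamma_{ab}V^a\partial_iV^b=-2C_{abc}A^b_iV^aV^c=0$, so $|V|_\gamma$ is constant on the connected manifold $\mathcal{M}$. But $V\in W^{2,2}_{-1/2}$ with weight $\delta=-1/2<0$, so the weighted Sobolev embedding (\ref{sobolev2}) gives $|V(x)|_\gamma\le C r(x)^{-1/2}\to 0$ as $r\to\infty$. A constant that tends to zero vanishes, hence $V\equiv 0$, and therefore $\xi\equiv 0$.

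The proof is short precisely because Theorems \ref{weakstrong}, \ref{kernel} and \ref{weakstrongYM} carry the analytic weight; the only genuinely new ingredient is the parallel‑transport argument for $V$, and the one place deserving care is the regularity bookkeeping that legitimises the pointwise identity $\partial_i|V|^2_\gamma=0$. I would also remark that the component equation $D\Phi_A^*(\xi)=0$ is never used — the conclusion already follows from the vanishing of the other three components alone.
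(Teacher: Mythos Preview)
Your proof is correct and follows essentially the same strategy as the paper: kill $(N,X)$ via the pure‑gravity Theorems \ref{weakstrong}/\ref{kernel}, reduce (\ref{adjointeta}) to $\partial_iV^a=-C^a_{bc}A^b_iV^c$, and then use the total antisymmetry of $C_{abc}$ to conclude $|V|^2_\gamma$ is constant and hence zero by decay. The only cosmetic difference is in the regularity bookkeeping for step three: the paper bootstraps $V$ to $W^{3,2}_{-1/2}$ via another mollification estimate and then invokes Morrey to land in $C^{1,1/2}$, whereas you observe more directly that $V,A\in C^0$ already forces the weak derivative $\partial_iV=-[A_i,V]$ to be continuous---both routes are valid, and your remark that $D\Phi_A^*(\xi)=0$ is never used is accurate.
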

\begin{proof}
From theorems \ref{kernel} and \ref{weakstrongYM} respectively, $(N,X)\equiv 0$ and $V\in W^{2,2}_{-1/2}$. From (\ref{adjointeta}), we have $\partial_i V^a=C^a_{bc}V^b A^c_i$ and we can repeat the arguments with $V_\epsilon$ and easily obtain the uniform bound,
\begin{align*}
\|V_\epsilon\|_{3,2,-1/2}&\leq c\|\partial^3 V_\epsilon\|_{2,-7/2}\\
&\leq c (\|\onabla^2 A_\lie{k}\|_{2,-7/2}\|V\|_{\infty,0}+\|\onabla A_\lie{k}\|_{4,-5/2}\|\onabla V\|_{4,-1}+\|A_\lie{k}\|_{\infty,-3/2}\|\onabla^2 V\|_{2,-2})\\
&\leq c\|A_\lie{k}\|_{2,2,-3/2}\|V\|_{2,2,0}.
\end{align*}
As above, we now have $V\in W^{3,2}_{-1/2}$; from the weighted version of Morrey's inequality (\ref{Morrey}), we have $V\in C^{1,1/2}$. That is $V$ strongly satisfies the equation
\begin{equation}
\frac{1}{2}\partial_i(V^aV_a)=\partial_i(V^a)V_a=C^a_{bc}V^bA_i^cV_a=0;
\end{equation}
since $V$ is asymptotic to zero and $\mathcal{M}$ is connected, $V\equiv 0$.
\end{proof}

We are now able to apply the implicit function theorem to prove the level sets of $\Phi$ are smooth submanifolds of $\mathcal{F}$. 
\begin{theorem}
For any $(s,S_i,\sigma_a)\in\mathcal{N}^*$, the set
\begin{equation}
\mathcal{C}(s,S,\sigma):=\{(g,A,\pi,\varepsilon)\in\mathcal{F}:\Phi(g,A,\pi,\varepsilon)=(s,S,\sigma)\}
\end{equation}
is a Hilbert submanifold of $\mathcal{F}$.
\end{theorem}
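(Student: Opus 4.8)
The plan is to verify that $\Phi\colon\mathcal{F}\to\mathcal{N}^*$ is a submersion at every point and then invoke the regular value theorem for Hilbert manifolds. Since $\mathcal{F}$ is modelled on a Hilbert space and $D\Phi_G$ is bounded for each $G\in\mathcal{F}$ (Corollary \ref{smoothness}), its kernel is closed and hence topologically complemented, so the only thing to check is that
\[
D\Phi_G\colon\mathcal{G}\times\mathcal{A}\times\mathcal{K}\times\mathcal{E}\to\mathcal{N}^*
\]
is surjective; once this holds for every $G$ (in particular every $(s,S,\sigma)$ is a regular value), the regular value theorem gives that $\mathcal{C}(s,S,\sigma)$ is a Hilbert submanifold of $\mathcal{F}$ with $T_G\mathcal{C}(s,S,\sigma)=\ker D\Phi_G$.

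Surjectivity of $D\Phi_G$ I would split into two parts. First, identify $\mathcal{N}^*$ with the continuous dual of $\mathcal{N}$ through the weighted $L^2$ pairing -- the weights $-1/2$ and $-5/2$ being conjugate on a $3$-manifold -- so that the formal adjoint $D\Phi^*_G$ of (\ref{adjointg})--(\ref{adjointeta}) is the Banach-space adjoint of $D\Phi_G$. By the closed range theorem $\overline{\Ran D\Phi_G}={}^{\perp}\ker D\Phi^*_G$, and Proposition \ref{trivialkernel} gives $\ker D\Phi^*_G=\{0\}$, so $\Ran D\Phi_G$ is dense in $\mathcal{N}^*$. Second, I would show $\Ran D\Phi_G$ is closed, which by the closed range theorem is equivalent to $\Ran D\Phi^*_G$ being closed; together with density this yields surjectivity, and the cokernel is trivial.

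The technical heart is therefore the closed range (semi-Fredholm) property of $D\Phi^*_G$, and the plan is to establish the a priori estimate
\[
\|\xi\|_{2,2,-1/2}\leq c\big(\|D\Phi^*_G\xi\|+\|\xi\|_{L^2(\Omega)}\big),\qquad\Omega\subset\subset\mathcal{M}\text{ fixed},
\]
for all $\xi=(N,X,V)$ in the domain of $D\Phi^*_G$, which by Theorems \ref{weakstrong} and \ref{weakstrongYM} is $W^{2,2}_{-1/2}$. This estimate would be assembled from (i) the corresponding estimate for the gravitational block $T(N,X)$ underlying Theorems \ref{weakstrong} and \ref{kernel} of \cite{phasespace}; (ii) the Yang--Mills block estimate of Lemma \ref{Vboundlemma}, controlling $\|V\|_{2,2,-1/2}$ by $\|(N,X)\|$, the data, and lower-order terms; and (iii) the fact that near infinity $D\Phi^*_G$ is a relatively compact perturbation of the decoupled model operator -- the flat Hessian acting on $N$, the flat Killing operator acting on $X$, and the flat gradient $\partial$ acting on $V$ -- each of which is injective with closed range on the pertinent weighted Sobolev spaces, the weight $\delta=-1/2$ being non-exceptional for each; interior elliptic regularity handles the compact part. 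With the estimate in hand, the error term $\|\xi\|_{L^2(\Omega)}$ is removed using $\ker D\Phi^*_G=\{0\}$ by the usual contradiction-and-compactness argument: a sequence $\xi_n$ with $\|\xi_n\|_{2,2,-1/2}=1$ and $\|D\Phi^*_G\xi_n\|\to0$ is bounded in $W^{2,2}_{-1/2}$, so Rellich on $\Omega$ yields a subsequence converging in $L^2(\Omega)$ to a limit annihilated by $D\Phi^*_G$, hence to $0$, contradicting $1\leq c(\|D\Phi^*_G\xi_n\|+\|\xi_n\|_{L^2(\Omega)})\to0$. Thus $\|\xi\|_{2,2,-1/2}\leq c\|D\Phi^*_G\xi\|$, so $D\Phi^*_G$ has closed range, and so does $D\Phi_G$.

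The main obstacle is assembling (i)--(iii) into a single coercive estimate, since $D\Phi^*_G$ is not a genuine elliptic system: $N$ enters to second order while $X$ and $V$ enter to first order, and the three blocks are coupled by the Yang--Mills structure constants. The saving feature -- and the reason the spaces of Section \ref{prelim} were chosen as they are -- is that these couplings are mediated only by fast-decaying fields: $E$, $B$ and $\varepsilon$ lie in $W^{1,2}_{-3/2}$, and because $\mathfrak{z}$ is the centre the bracket $[A,\,\cdot\,]$ sees only $A_\mathfrak{k}\in W^{2,2}_{-3/2}$, so every coupling term is a relatively compact perturbation. Closing the coupled estimate therefore requires exactly the sharp bookkeeping with the inequalities of Theorem \ref{ineqs} already carried out in Lemma \ref{Vboundlemma}, together with the matching estimates from \cite{phasespace} for the gravitational block. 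Once surjectivity of $D\Phi_G$ is secured, the submanifold claim is immediate from the regular value theorem.
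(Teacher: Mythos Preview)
Your strategy---density of the range via $\ker D\Phi^*_G=0$ and then closed range via a coercive estimate for the formal adjoint---is not the route the paper takes, and as written it has a genuine gap. The paper does \emph{not} try to show that $D\Phi^*_G$ has closed range. Instead it restricts $D\Phi_G$ to a particular three-parameter family of variations,
\[
h_{ij}=-\tfrac12 g_{ij}y,\quad b=0,\quad p^{ij}=\tfrac12(\nabla^iY^j+\nabla^jY^i-\nabla_kY^k g^{ij})\sqrt{g},\quad f^a_i=-\partial_i\psi^a\sqrt{g},
\]
obtaining a square second-order operator $F(y,Y,\psi)=D\Phi_G(h,0,p,f)$ whose principal part is the scalar Laplacian in each slot. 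Standard weighted elliptic theory then gives the scale-broken estimate $\|\mathcal{Y}\|_{2,2,-1/2}\leq c(\|F\mathcal{Y}\|_{2,-5/2}+\|\mathcal{Y}\|_{2,0})$, and by the same structure for $F^*$ one concludes that $F$ is semi-Fredholm: closed range and finite-dimensional cokernel. Since $\Ran F\subset\Ran D\Phi_G$ and $\overline{\Ran D\Phi_G}=\mathcal{N}^*$ by Proposition~\ref{trivialkernel}, the range of $D\Phi_G$ is both dense and of finite codimension, hence all of $\mathcal{N}^*$. This sidesteps entirely the overdetermined nature of $D\Phi^*_G$.

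The gap in your approach is the passage from a coercive estimate for the \emph{formal} adjoint to closed range of $D\Phi_G$. The closed range theorem equates closed range of $D\Phi_G$ with closed range of its \emph{Banach} adjoint, which is a bounded map $\mathcal{N}=L^2_{-1/2}\to(\mathcal{G}\times\mathcal{A}\times\mathcal{K}\times\mathcal{E})^*$, the target being negative-order weighted Sobolev spaces. The estimate you propose, built from Lemma~\ref{Vboundlemma} and the analogous gravitational estimates, controls $\|\xi\|_{2,2,-1/2}$ in terms of $\|D\Phi^*_G\xi\|$ measured in \emph{positive}-order norms such as $L^2_{-3/2}$ and $W^{1,2}_{-3/2}$; this is a different operator between different spaces, and its closed range does not imply closed range of the Banach adjoint. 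One can try to repair this by the Hahn--Banach route---use the estimate to extend $\xi\mapsto\int f\cdot\xi$ to a functional on $\Ran D\Phi^*_G$ and represent it by some $Z$---but $Z$ then lives in the dual of the positive-order target, which is strictly larger than $\mathcal{G}\times\mathcal{A}\times\mathcal{K}\times\mathcal{E}$, and there is no elliptic regularity for the underdetermined operator $D\Phi_G$ to bring $Z$ back into the phase space. The auxiliary operator $F$ is introduced precisely to close this circle: it is determined, genuinely elliptic, and its range already sits inside $\Ran D\Phi_G$.
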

\begin{proof}
We simply must establish that $D\Phi$ is surjective and splits its domain into the direct sum of the $Ker(D\Phi)$ and a complementary subspace, then the result follows from the implicit function theorem. Since $D\Phi$ is bounded, the kernel is closed and hence splits. The codomain splits as $\mathcal{N}^*=\overline{\Ran(D\Phi)}\oplus \coker(D\Phi)$, but from proposition \ref{trivialkernel}, $\coker(D\Phi)$ is trivial. To establish surjectivity, we simply must show that $D\Phi$ has closed range.

Consider variations of the form
\begin{align}
h_{ij}=-\frac{1}{2}g_{ij}y&&b_i^a=0\\
p^{ij}=\frac{1}{2}(\nabla^iY^j+\nabla^iY^j-\nabla_kY^kg^{ij})\sqrt{g}&&f_i^a=-\partial_i\psi^a\sqrt{g}.
\end{align}
With $(h,b,p,f)$ of this form, define $F(\mathcal{Y})=F(y,Y,\psi)=D\Phi_{(g,A,\pi,\varepsilon)}(h,0,p,f)$. Explicitly,
\begin{align*}
F&(y,Y,\psi)=\\
&\begin{bmatrix}
\Delta y\sqrt{g}-\frac{1}{4}\Phi_0(g,A,\pi,\varepsilon)y+\frac{1}{2}\pi^j_j\nabla_j Y^j-2\pi^{ij}\nabla_i Y_j-\frac{1}{4}(E^2+B^2)y+\varepsilon^i_a\partial_i\psi^a\\
\Delta Y_i \sqrt{g}+R_{ij}Y^j\sqrt{g}+\onabla^j(\psi_a)(\onabla_i A^a_j-\onabla_j A_i^a+C^a_{bc}A^b_iA^c_j)\sqrt{g}-\nabla_j(\pi^j_i)y-\pi_i^j\onabla_jy+\frac{1}{2}\pi^j_j\onabla_iy\\
\mathring{\Delta}\psi_a\sqrt{g}+C^c_{ab}\onabla^j(\psi_c)A^b_j\sqrt{g}
\end{bmatrix}
\end{align*}
Clearly we have $F:W^{2,2}_{-1/2}\rightarrow\mathcal{N}^*$. This new operator is clearly bounded and the adjoint map has similar structure. We have the following scale-broken estimate from \cite{AF}
\begin{equation}
\|u\|_{2,2,-1/2}\leq c(\|\Delta u\|_{2,-5/2}+\|u\|_{2,0}),
\end{equation}
from which we can establish an elliptic estimate for $F$.
\begin{align}
\|\Delta\mathcal{Y}\|_{2,-5/2}&\leq c\Big{(}\|F(\mathcal{Y})\|_{2,-5/2}+\|\Phi_0 y\|_{2,-5/2}+\|\pi\onabla \mathcal{Y}\|_{2,-5/2}+\|\pi\tilde{\Gamma}Y\|_{2,-5/2}+\|Ric(Y)\|_{2,-5/2}\nonumber\\
&+\|\onabla(\psi)\onabla{A}\|_{2,-5/2}+\|\onabla(\psi)A^2_\lie{k}\|_{2,-5/2}+\|\nabla(\pi)y\|_{2,-5/2}+\|\onabla(\psi)A_\lie{k}\|_{2,-5/2}\nonumber\\
&+\|E^2 y\|_{2,-5/2}+\|B^2 y\|_{2,-5/2}+\|E\onabla\psi\|_{2,-5/2}\Big{)}
\end{align}
Where $\tilde{\Gamma}^k_{ij}:=\mathring{\Gamma}^k_{ij}-\Gamma^k_{ij}=\frac{1}{2}g^{kl}(\onabla_ig_{jl}+\onabla_jg_{il}-\onabla_lg_{ij})$ is the connection difference tensor and is clearly $W^{1,2}_{-3/2}$. It's easy to check that $Ric$ is of the form $Ric\sim(\mathring{Ric}+\onabla\tilde{\Gamma}+\tilde{\Gamma}^2)$, so it follows $Ric\in L^2_{-5/2}$.

For the sake of presentation, we define the quantities $U_1:=(\Phi_0,Ric,\nabla\pi,\pi\tilde{\Gamma},E^2,B^2)\in L^2_{-5/2}$ and ${U_2:=(\pi,\onabla(A),A_\lie{k},A^2_\lie{k},E)\in W^{1,2}_{-3/2}}$. With this notation we have
\begin{equation}
\|\mathcal{Y}\|_{2,2,-1/2}\leq c(\|F(\mathcal{Y})\|_{2,-5/2}+\|U_1 \mathcal{Y}\|_{2,-5/2}+\|U_2\onabla\mathcal{Y}\|_{2,-5/2}+\|\mathcal{Y}\|_{2,0}).\label{firstFbound}
\end{equation}
Now the separate terms can be easily bound
\begin{align*}
\|U_1\mathcal{Y}\|_{2,-5/2}&\leq \|U_1\|_{2,-5/2}\|\mathcal{Y}\|_{\infty,0}\\
&\leq C\|\mathcal{Y}\|_{1,4,0}\\
&\leq C(\|\mathcal{Y}\|_{4,0}+\|\onabla\mathcal{Y}\|_{4,-1})\\
&\leq C(\|\mathcal{Y}^{1/4}\mathcal{Y}^{3/4}\|_{4,0}+\|\onabla\mathcal{Y}^{1/4}\onabla\mathcal{Y}^{3/4}\|_{4,-1})\\
&\leq C(\|\mathcal{Y}\|_{2,0}^{1/4}\|\mathcal{Y}\|^{3/4}_{6,0}+\|\onabla\mathcal{Y}\|_{2,-1}^{1/4}\|\onabla\mathcal{Y}\|_{6,-1}^{3/4})\\
&\leq C\|\mathcal{Y}\|_{1,2,0}^{1/4}\|\mathcal{Y}\|^{3/4}_{2,2,0}\\
&\leq \epsilon\|\mathcal{Y}\|_{2,2,0}+\frac{C}{\epsilon^3}\|\mathcal{Y}\|_{1,2,0},
\end{align*}
where the last line comes from Young's inequality.

Similarly, we have
\begin{align*}
\|U_2\onabla\mathcal{Y}\|_{2,-5/2}&\leq c\|U_2\|_{6,-3/2}\|\onabla\mathcal{Y}\|_{3,-1}\\
&\leq c\|U_2\|_{1,2,-3/2}\|\onabla\mathcal{Y}\|_{3,-1}\\
&\leq C\|\onabla(\mathcal{Y})^{1/3}\onabla(\mathcal{Y})^{2/3}\|_{3,-1}\\
&\leq C\|\onabla\mathcal{Y}\|^{1/3}_{2,-1}\|\onabla\mathcal{Y}\|^{2/3}_{4,-1}\\
&\leq C\|\onabla\mathcal{Y}\|^{1/3}_{2,-1}\|\onabla\mathcal{Y}\|^{2/3}_{1,2,-1}\\
&\leq \epsilon\|\mathcal{Y}\|_{2,2,0}+\frac{C}{\epsilon^2}\|\mathcal{Y}\|_{1,2,0},
\end{align*}
where we have switched from $c$ to $C$ to indicate the constant's dependence on $\|(g,A,\pi,\varepsilon)\|_\mathcal{F}$.

Combining these estimates with (\ref{firstFbound}) and applying (\ref{interpolation}) to $\|\mathcal{Y}\|_{1,2,0}$, we have
\begin{equation}
\|\mathcal{Y}\|_{2,2,-1/2}\leq c (\|F(\mathcal{Y})\|_{2,-5/2}+\|\mathcal{Y}\|_{2,0}).\label{Fbound}
\end{equation}
By construction, the adjoint operator $F^*$ has the same structure and thus also satisfies an estimate of the form (\ref{Fbound}). In particular, this implies $\ker(F^*)\subset W^{2,2}_{-1/2}$. Take a sequence $\mathcal{X}_n\in \ker(F^*)$ with $\|\mathcal{X}_n\|_{2,2,-1/2}\leq 1$ and we have $\|\mathcal{X}_n-\mathcal{X}_m\|_{2,2,-1/2}\leq C\|\mathcal{X}_n-\mathcal{X}_m\|_{2,0}$ and by passing to a subsequence and applying a weighted version of the Rellich compactness theorem (see \cite{ellipticsys}, Lemma 2.1), the closed unit ball in $\ker(F^*)$ is compact and thus $\ker(F^*)$ is finite dimensional. The same reasoning tells us that $\ker(F)$ is also finite dimensional, thus there is a closed subspace $Z$ such that $W^{2,2}_{-1/2}=Z\oplus \ker(F)$. To show $F$ has closed range, it will suffice to prove
\begin{equation}
\|\mathcal{Y}\|_{2,2,-1/2}\leq C\|F(\mathcal{Y})\|_{2,-5/2}, \qquad \text{for all }\mathcal{Y}\in Z.\label{Z}
\end{equation}
If (\ref{Z}) were not true, we could take a sequence $\mathcal{Y}_n$ in $Z$ with $\|\mathcal{Y}_n\|_{2,2,-5/2}=1$ such that $\|F(\mathcal{Y}_n)\|_{2,-5/2}\rightarrow0$. Then we can again pass to a subsequence converging in $L^2_0$ and (\ref{Fbound}) implies $\mathcal{Y}_n$ converges to some $\mathcal{Y}\neq 0$ in $Z\cap\ker(F)$, which would be a contradiction. Now we have $\ker(F)\oplus\coker(F)=\mathcal{N}^*$.

Clearly $\Ran(F)\subset\Ran(D\Phi)$, so at most $\Ran(D\Phi)$ differs from $\mathcal{N}^*$ by a finite dimensional closed subspace, and since $\overline{\Ran(D\Phi)}=\mathcal{N}^*$, $D\Phi$ is surjective.
\end{proof}

\section{The Hamiltonian}\label{Shamiltonian}
It is well known, that in order to generate the correct equations of motion, the first variation of the Hamiltonian density must be of the form
\begin{equation}
\delta H=X\cdot\delta q+Y\cdot\delta p,
\end{equation}
where $q$ and $p$ are the canonical position and momentum respectively. Hamilton's equations can then be read off as
\begin{align}
\frac{\partial q}{\partial t}=Y,
\hspace{4mm}&\hspace{4mm}\frac{\partial p}{\partial t}=-X,
\end{align}
where $t$ is the time parameter.

In the framework of general relativity we need to make precise what we mean by `time'. We interpret $t$ as the flow parameter of the (yet to be specified) lapse-shift vector field on the spacetime. See \cite{ehlers} for a detailed discussion on this.

In the Einstein-Yang-Mills case, we have a vector field $\xi$ on the bundle ${}^4\hspace{-0.6mm}P$ generating the evolution, in place of the usual lapse-shift vector field. This corresponds to both a choice of coordinates on $^4\mathcal{M}$ and a choice of gauge. One may interpret the flow of $\xi$ as simultaneously evolving the data through time while continuously changing the gauge. See \cite{armsEYM} for details.

In the Einstein-Yang-Mills case, the canonical position and momentum are $(g,A)$ and $(\pi,\varepsilon)$ respectively. In order to generate the correct equations of motion, we should expect to write the first variation of the Hamiltonian density as
\begin{align}
\delta H(g,A,\pi,\varepsilon)=&X_1(g,A,\pi,\varepsilon;\xi)^{ij} \delta g_{ij}+X_2(g,A,\pi,\varepsilon;\xi)_a^i \delta A_i^a \\
&+Y_1(g,A,\pi,\varepsilon;\xi)_{ij} \delta \pi^{ij}+Y_2(g,A,\pi,\varepsilon;\xi)^a_i \delta \varepsilon_a^i,
\end{align}
or equivalently
\begin{equation}
D H_{(g,A,\pi,\varepsilon)}\cdot(h,b,p,f)=(X_1,X_2,Y_1,Y_2)|_{(g,A,\pi,\varepsilon)}\cdot(h,b,p,f).\label{hamform}
\end{equation}
Notice that the Hamiltonian must have some dependence on the direction $\xi$ in ${}^4\hspace{-0.6mm}P$, in which we are to evolve the data.

For the remainder of this paper, it will be convenient to write a point in the phase space as $G=(g,A,\pi,\varepsilon)\in\mathcal{F}$ and a tangent vector $Z=(h,b,p,f)\in T_G\mathcal{F}$. With this notation, the usual ADM Hamiltonian (with Yang-Mills fields) is given by
\begin{equation}
\mathcal{H}^{ADM}(G;\xi)=-\int_\mathcal{M}\xi^\alpha\Phi_\alpha(G),
\end{equation}
this is the pure constraint form of the Hamiltonian.

It will be shown, if the boundary terms which were cast out when defining $D\Phi^*$ do indeed vanish, then we have
\begin{equation}
D\mathcal{H}^{ADM}_{G}(\xi)\cdot Z=-\int_\mathcal{M}D\Phi_{G}^*(\xi)\cdot Z,\label{DHADM}
\end{equation}
and therefore the Hamiltonian density is of the form (\ref{hamform}). We will see however, that if $\xi$ is not asymptotically zero, then this will not be the case; we will discuss this point later.

Hamilton's equations now become
\begin{equation}\label{evoeq}
\frac{\partial}{\partial t}
\begin{bmatrix}
g\\A\\ \pi\\ \varepsilon
\end{bmatrix}
=-J\circ D\Phi_{G}^*(\xi),
\end{equation}
where
\begin{equation}
J=\begin{bmatrix}0&0&1&0\\0&0&0&1\\-1&0&0&0\\0&-1&0&0\end{bmatrix},
\end{equation}
is the natural symplectic structure on $\mathcal{F}$. Equation (\ref{evoeq}) motivates Moncrief's result \cite{moncrief2} equating elements of $\ker (D\Phi_G^*)$ with spacetime Killing vector fields. We thus refer to elements of $\ker (D\Phi_G^*)$ as \textit{generalised Killing vectors} and if $\xi$ corresponds to a stationary Killing field, then we call $G=(g,A,\pi,\varepsilon)$ \textit{generalised stationary data}.
\begin{proposition}
The map $\mathcal{H}^{ADM}:\mathcal{F}\times\mathcal{N}\rightarrow\mathbb{R}$ is a smooth map.
\end{proposition}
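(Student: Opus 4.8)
The plan is to present $\mathcal{H}^{ADM}$ as the composition of the smooth constraint map $\Phi$ with a continuous bilinear pairing, after which smoothness is immediate.

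First I would make the pairing precise. The contraction $\xi^\alpha\Phi_\alpha(G)$ of a vector field on ${}^4\hspace{-0.6mm}P$ (an element of $\mathcal{N}$) with a covector density on ${}^4\hspace{-0.6mm}P$ (an element of $\mathcal{N}^*$) is a scalar density, so $\int_\mathcal{M}\xi^\alpha\Phi_\alpha(G)$ is well-defined independently of coordinates. For integrability and continuity, note that $\mathcal{N}=L^2_{-1/2}$, $\mathcal{N}^*=L^2_{-5/2}$, and $-1/2+(-5/2)=-3$; the weighted H\"older inequality (\ref{holder}) gives $\|\xi^\alpha\Phi_\alpha(G)\|_{1,-3}\leq\|\xi\|_{2,-1/2}\,\|\Phi(G)\|_{2,-5/2}$, and since the weight $r^{-(-3)-3}=r^{0}$ is trivial, $\|\cdot\|_{1,-3}$ is exactly the unweighted $L^1$ norm, so the integral converges absolutely with $|\langle\xi,\Phi(G)\rangle|\leq\|\xi\|_{\mathcal{N}}\|\Phi(G)\|_{\mathcal{N}^*}$. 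Hence $\langle\cdot,\cdot\rangle:\mathcal{N}\times\mathcal{N}^*\to\mathbb{R}$ is a continuous bilinear map, and every continuous bilinear map of Hilbert spaces is $C^\infty$: its derivative at $(\xi,\Theta)$ in the direction $(\eta,\Sigma)$ is $\langle\eta,\Theta\rangle+\langle\xi,\Sigma\rangle$, its second derivative is the constant bounded bilinear map it induces, and all higher derivatives vanish.

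Second, by Corollary \ref{smoothness} the map $\Phi:\mathcal{F}\to\mathcal{N}^*$ is smooth, so $(G,\xi)\mapsto(\xi,\Phi(G))$ is a smooth map $\mathcal{F}\times\mathcal{N}\to\mathcal{N}\times\mathcal{N}^*$, being the pair of the smooth maps $\mathrm{id}_{\mathcal{N}}$ and $\Phi$ composed with the coordinate projections; recall $\mathcal{F}$ is an open subset of the Hilbert space $\mathcal{G}\times\mathcal{A}\times\mathcal{K}\times\mathcal{E}$ (after translating $g\mapsto g-\mathring{g}$), so no chart bookkeeping is needed. Therefore $\mathcal{H}^{ADM}=-\langle\cdot,\cdot\rangle\circ\big((G,\xi)\mapsto(\xi,\Phi(G))\big)$ is a composition of smooth maps, hence smooth.

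The only point that genuinely requires care — and the sole potential obstacle — is the density/weight bookkeeping: one must check that $\xi^\alpha\Phi_\alpha$ carries the correct weight for $\int_\mathcal{M}$ to be coordinate-independent, and that the decay exponents of $\mathcal{N}$ and $\mathcal{N}^*$ are conjugate for the $L^1$ pairing, i.e.\ sum to $-3$. Once this is matched against the definitions of $\mathcal{N}$, $\mathcal{N}^*$ and $\|\cdot\|_{1,\delta}$, there is no analytic content beyond the smoothness of $\Phi$ already in hand. One could instead argue directly, writing $D\mathcal{H}^{ADM}_{(G,\xi)}\cdot(Z,\eta)=-\int_\mathcal{M}\eta^\alpha\Phi_\alpha(G)-\int_\mathcal{M}\xi^\alpha D\Phi_{\alpha\,G}\cdot Z$ and bounding each term via (\ref{holder}) together with Proposition \ref{prop1}, then iterating; but the composition argument is cleaner and I would present that.
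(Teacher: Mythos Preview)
Your proof is correct and takes essentially the same approach as the paper: both use the weighted H\"older inequality to bound $\|\xi\Phi\|_{1,-3}\leq\|\xi\|_{2,-1/2}\|\Phi\|_{2,-5/2}$, invoke the smoothness of $\Phi$ from Corollary~\ref{smoothness}, and exploit linearity in $\xi$. Your presentation as a composition of a smooth map with a continuous bilinear pairing is more explicit than the paper's terse argument, but the content is identical.
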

\begin{proof}
The smoothness in $G$ follows from the smoothness of $\Phi$. We have $|\mathcal{H}^{ADM}(G;\xi)|=\|\xi\Phi\|_{1,-3}\leq\|\xi\|_{2,-1/2}\|\Phi\|_{2,-5/2}$, that is, $\mathcal{H}^{ADM}$ is bounded and linear in $\xi$.
\end{proof}
We next establish the validity of equation (\ref{DHADM}).

\begin{theorem}\label{admhamil}
For all $\xi\in W^{2,2}_{-1/2}$,
\begin{equation}
D\mathcal{H}^{ADM}_{G}(\xi)\cdot Z=-\int_\mathcal{M}D\Phi_{G}^*(\xi)\cdot Z\nonumber,
\end{equation}
for all $Z\in T_{G}\mathcal{F}$.
\end{theorem}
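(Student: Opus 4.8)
The plan is to differentiate $\mathcal{H}^{ADM}(\,\cdot\,;\xi)$ by the chain rule and then integrate by parts, reading the definition of $D\Phi^*$ backwards and controlling the boundary terms that were discarded when $D\Phi^*$ was introduced. Since $\mathcal{H}^{ADM}(\,\cdot\,;\xi)$ is the composition of the smooth map $\Phi:\mathcal{F}\to\mathcal{N}^*$ (Corollary \ref{smoothness}) with the functional $\eta\mapsto-\int_\mathcal{M}\xi^\alpha\eta_\alpha$, which is bounded and linear on $\mathcal{N}^*$ by the weighted H\"older inequality (\ref{holder}) (the weight $-\tfrac12$ of $\xi$ and the weight $-\tfrac52$ of $\eta$ summing to $-3$), it is Fr\'echet differentiable with
\[
D\mathcal{H}^{ADM}_{G}(\xi)\cdot Z=-\int_\mathcal{M}\xi^\alpha D\Phi_{\alpha\,G}(Z).
\]
For fields with the local $H^2\times H^2\times H^1\times H^1$ regularity of $\mathcal{F}$, which by the Sobolev and trace theorems suffices to integrate by parts on each compact region $\{r\le R\}$ (a manifold with boundary $S_R$ and no inner boundary), the very construction of $D\Phi^*$ gives
\[
\int_\mathcal{M}\xi^\alpha D\Phi_{\alpha\,G}(Z)=\int_\mathcal{M}D\Phi^*_G(\xi)\cdot Z+\lim_{R\to\infty}\oint_{S_R}\mathcal{B}^i(\xi,Z,G)\,dS_i,
\]
where $\mathcal{B}^i$ collects exactly the discarded terms: from $\Delta h^k_k$ and $\nabla^i\nabla^j h_{ij}$ in (\ref{Dphi0}) one gets fluxes of the form $N\onabla h$ and $h\onabla N$ (times $\sqrt g$); from $\nabla_j p^j_i$ in (\ref{Dphii}) one gets $X^ip^j_i$; from $\nabla_j f^j_a$ in (\ref{Dphia}) one gets $V^af^j_a$; and the remaining derivative terms $\nabla_j b^a_k$ (paired with $N$) and $\pi^{jk}\nabla_j h_{ik}$, $\varepsilon^j_a\nabla_i b^a_j$ (paired with $X^i$) contribute, respectively, fluxes schematically of the form $NBb$, $X\pi h$ and $X\varepsilon b$, all with bounded coefficients built from $g,g^{-1},\sqrt g,A$.

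It remains to show the boundary contribution vanishes. First, the limit exists: $\int_\mathcal{M}\xi\cdot D\Phi_G(Z)$ is finite by the first display, and $\int_\mathcal{M}D\Phi^*_G(\xi)\cdot Z$ is finite because each block of $D\Phi^*_G(\xi)$ lies in the weighted $L^2$ space dual, under the $L^1$ pairing, to the corresponding factor of $T_G\mathcal{F}$ (the slowest-decaying terms being $\nabla^i\nabla^j N\in L^2_{-5/2}$ and $\partial_iV^a\in L^2_{-3/2}$; the nonabelian term $C^a_{bc}A^b_iV^c=C^a_{bc}(A_{\mathfrak{k}})^b_i(V_{\mathfrak{k}})^c$ is harmless since it only sees $A_{\mathfrak{k}}$), so (\ref{holder}) makes every pairing an honest $L^1$ density; then the divergence theorem on $\{r\le R\}$ identifies $\oint_{S_R}\mathcal{B}\,dS$ with $\int_{\{r\le R\}}\big(\xi\cdot D\Phi_G(Z)-D\Phi^*_G(\xi)\cdot Z\big)$, which converges as $R\to\infty$. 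Second, each flux integrand lies in $L^1_{-2}$: the terms $N\onabla h$, $h\onabla N$, $X^ip^j_i$ and $V^af^j_a$ all have weight $-\tfrac12+(-\tfrac32)=-2$, with, e.g., $\|N\onabla h\|_{1,-2}\le\|N\|_{2,-1/2}\|\onabla h\|_{2,-3/2}<\infty$ by (\ref{holder}), while $NBb$, $X\pi h$ and $X\varepsilon b$ have weight $-\tfrac52<-2$. Finally, the elementary averaging lemma finishes it: if $w\ge 0$ and $w\in L^1_{-2}$, i.e. $\int_1^\infty R\big(\int_{S^2}|w|(R\omega)\,d\Omega\big)\,dR<\infty$, then $\liminf_{R\to\infty}R^2\int_{S^2}|w|(R\omega)\,d\Omega=0$, hence $\liminf_{R\to\infty}\oint_{S_R}|\mathcal{B}|\,dS=0$; a convergent limit having a subsequence that tends to $0$ must equal $0$.

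The work here is essentially bookkeeping: one must enumerate carefully all the flux terms produced by the (long) expressions (\ref{Dphi0})--(\ref{Dphia}), in particular those arising from integrating by parts twice in $D\Phi_0$ and those hidden in the Lie-derivative pieces of the adjoint, and check the weight of each. The only genuine subtlety, and the reason the averaging lemma rather than a pointwise estimate on $S_R$ is needed, is that the slowest-decaying data -- the components $N,X,V$ of $\xi$ and $h$ of $Z$, which lie in $W^{2,2}_{-1/2}$ and so are only $O(r^{-1/2})$ pointwise -- pair against quantities of weight $-3/2$ ($\onabla N$, $\onabla h$, $p$, $f$) to produce the leading fluxes $N\onabla h$, $h\onabla N$, $X^ip^j_i$, $V^af^j_a$, which therefore sit exactly on the threshold $L^1_{-2}$ (pointwise only $O(r^{-2})$, not $o(r^{-2})$): just enough for the integral argument but not for a naive pointwise one. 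Once every flux integrand is placed in $L^1_{-2}$ the boundary contribution vanishes, which proves the theorem and, in particular, establishes (\ref{DHADM}), so that $\mathcal{H}^{ADM}$ generates the correct evolution equations when $\xi$ is asymptotically zero.
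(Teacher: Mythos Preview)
Your argument is correct and reaches the same conclusion as the paper, but by a genuinely different route. The paper controls the boundary terms directly on each sphere: it splits the divergence (\ref{boundary}) into $\mathcal{B}^1$ (terms of the form $\xi\tau$ with $\tau\in W^{1,2}_{-3/2}$) and $\mathcal{B}^2$ (terms of the form $\alpha\beta\xi$ with an extra $W^{2,2}_{-1/2}$ factor), and then combines the trace estimate $\oint_{S_R}|u|\le c\sqrt{R}\,\|u\|_{1,2,-3/2:A_R}$ (Lemma~4.4 of \cite{phasespace}) with the pointwise Morrey decay $\|\xi\|_{\infty:S_R}=o(R^{-1/2})$ to show $\oint_{S_R}|\mathcal{B}^k|\to 0$ for \emph{every} $R\to\infty$. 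You instead argue indirectly: first verify that both $\xi\cdot D\Phi_G(Z)$ and $D\Phi^*_G(\xi)\cdot Z$ are in $L^1(\mathcal{M})$, so that the boundary integral has a limit; then place each flux term in $L^1_{-2}$ and invoke the elementary averaging observation that $\int_1^\infty R\,\phi(R)\,dR<\infty$ forces $\liminf_R R^2\phi(R)=0$, whence the limit is zero. Your approach avoids the trace lemma entirely and uses only the weighted H\"older inequality, at the cost of having to check separately that every block of $D\Phi^*_G(\xi)$ pairs with the corresponding component of $Z$ in $L^1$ (which you sketch but do not carry out in full; it is routine but lengthy). The paper's approach, by contrast, gives a pointwise-in-$R$ estimate and makes transparent exactly which terms ($\mathcal{B}^1$) require $\xi\to 0$ at infinity and which ($\mathcal{B}^2$) survive when $\xi$ is merely bounded---a distinction that is exploited immediately afterwards in the proof of Theorem~\ref{thmhamiltonian}.
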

\begin{proof}
This is equivalent to the statement that the formal adjoint of $D\Phi_G$, given in section \ref{Sconstraint}, is indeed the adjoint. We simply must demonstrate the boundary terms at infinity arising from integration by parts do indeed vanish. These boundary terms are given by
\begin{align}
Z\cdot& D\Phi_G^*(\xi)-\xi^\alpha D\Phi_{G\alpha}(Z)=\nonumber\\
&\nabla^i\Big{(}(N(\onabla_i\text{tr}_gh-\nabla^jh_{ij})+\onabla^j(N)h_{ij}-\text{tr}_g h\onabla_i(N))\sqrt{g}-2X^jp_{ij}+V^af_{ai}\Big{)}\label{boundary}\\
&-\nabla^i\Big{(}2\pi^k_i h_{jk}X^j - \pi^{jk}h_{jk}X_i +\epsilon_{ijk}b^{ak}B^j_a N\sqrt{g}+\varepsilon_{ia}b^a_jX^j-X_i\varepsilon^j_ab^a_j\Big{)}\nonumber.
\end{align}
The boundary terms have been expressed as two separate divergences corresponding to their decay rates at infinity - this distinction will be important later. Note, these divergences do indeed make sense as boundary integrals at inifinity in the usual (trace) sense, (see \cite{phasespace} - Lemma 4.3 and Lemma 4.4). Formally, we consider an exhaustion of $\mathcal{M}$ by compact sets $\mathcal{M}_k$ with smooth boundary, and take the limit of the boundary integrals as $k\rightarrow\infty$. For convenience, we choose the exhaustion to be euclidean balls near infinity and consider the limit of integrals on spheres.

Lemma 4.4 of \cite{phasespace} gives us the estimate
\begin{equation}
\oint_{S_R}|u|\leq c \sqrt{R}\|u\|_{1,2,-3/2:A_R}\label{tracebound},
\end{equation}
where $S_R$ is the sphere of radius $R$ centred at zero, and $A_R$ is the region bound between $S_{R}$ and $S_{2R}$. For simplicity, let us denote by $\nabla^i\mathcal{B}^1_{i}$ and $\nabla^i\mathcal{B}^2_{i}$, the first and second divergences in (\ref{boundary}) respectively. $\mathcal{B}^2$ is a collection of terms of the form $\alpha\beta\xi$; where $\alpha\in W^{2,2}_{-1/2}$, $\beta\in W^{1,2}_{-3/2}$ and $\xi\in W^{2,2}_{-1/2}$;. Note, $g$ and $g^{-1}$ are bound, so we needn't consider the raising or lowering of indices in our estimates.

Applying (\ref{tracebound}), we have
\begin{align}
\oint_{S_R}|\mathcal{B}^2|&\leq c \|\alpha\|_{\infty:S_R}\|\xi\|_{\infty:S_R}\|\beta\|_{1:S_R}\nonumber\\
&\leq o(R^{-1/2})\|\xi\|_{\infty:S_R}\sqrt{R}\|\beta\|_{1,2,-3/2}\nonumber\\
&\leq o(1)\|\xi\|_{\infty:S_R}\|\beta\|_{1,2,-3/2}\label{B2bound},
\end{align}
where we have made use of the fact $\alpha\in W^{2,2}_{-1/2}(\mathcal{M})\subset C^0(\mathcal{M})$. In the limit as $R$ tends to infinity, this integral vanishes and therefore $\mathcal{B}^2$ contributes no boundary terms. Note that this still holds if $\xi$ is only $C^0$ and bound. $\mathcal{B}^1$ can be expressed as a collection of terms of the form $\xi\tau$, where $\xi\in W^{2,2}_{-1/2}$ and $\tau\in W^{1,2}_{-3/2}$.
\begin{align}
\oint_{S_R}|\mathcal{B}^1|&\leq c\|\xi\|_{\infty:S_R}\|\tau\|_{1:S_R}\\
&\leq o(1)\|\tau\|_{1,2,-3/2}
\end{align}
For the same reasons as above, the remaining boundary terms also vanish and therefore we have  $\int_\mathcal{M}\xi\cdot D\Phi_{G}(Z)=\int_\mathcal{M}Z\cdot D\Phi_{G}^*(\xi)$.\\
\end{proof}
The necessity that $\xi\rightarrow 0$ at infinity is twofold; not only does it ensure we have control on $\mathcal{B}^1$, it is also required for $\mathcal{H}^{ADM}$ to be well defined on $\mathcal{F}$, as $\Phi(G)$ is not integrable for generic initial data. When $\xi$ is taken to be asymptotic to some non-zero constant vector, it will be shown that the non-vanishing boundary terms, $\mathcal{B}^1$, correspond to the first variation of energy-momentum and charge. This leads us to modify our Hamiltonian \`a la Regge and Teitelboim \cite{RT}. Before discussing this, we should make precise what we mean by ``asymptotic to a constant vector".

Fix some $\xi_\infty\in\mathbb{R}^{3,1}\oplus\mathfrak{g}$, which on some exterior region $E_R$, may be identified with a section, ${\tilde{\xi}_\infty\in C^\infty (\Lambda^0\times T\mathcal{M}\times\mathfrak{g}\otimes\Lambda^0)}$, such that $\onabla\xi=0$. We now represent $\xi_\infty$ as $\hat{\xi}_\infty\in (\Lambda^0\times T\mathcal{M}\times\mathfrak{g}\otimes\Lambda^0)$, with $\hat{\xi}_\infty=\tilde{\xi}_\infty$ on $E_{2R}$ and $\hat{\xi}\equiv 0$ on $E_R$. Obviously $\hat{\xi}_\infty$ is not unique, however the difference between any two choices of $\hat{\xi}_\infty$ is in $C^\infty_0 (\Lambda^0\times T\mathcal{M}\times(\mathfrak{g}\otimes\Lambda^0))\subset\mathcal{N}$. This means that the space 
\begin{equation}
\xi_\infty+\mathcal{N}:=\{\xi:\xi-\hat{\xi}_\infty\in\mathcal{N}\}
\end{equation}
is well defined.

Let us briefly digress to discuss the ADM energy-momentum and Yang-Mills electric charge. The ADM energy-momentum covector, $\mathbb{P}(g,\pi)=(\breve{E},p_i)$ is usually given by
\begin{align}
16\pi \breve{E}&=\oint_{S_\infty}(\partial_ig_{ij}-\partial_j g_{ii})dS^j\\
16\pi p_i&=2\oint_{S_\infty}\pi_{ij}dS^j,
\end{align}
where the indices refer to some rectangular coordinate system at infinity. We use $\breve{E}$ to indicate the energy, to avoid confusion with the electric field.

We also introduce the standard definition of Yang-Mills electric charge
\begin{equation}
16\pi Q_a=4\oint_{S_\infty}E_{ai}dS^i=-\oint_{S_\infty}\varepsilon_{ai}dS^i,
\end{equation}
which clearly agrees with (\ref{YMcharge}) and the usual Maxwell total electric charge.

It will be more convenient to work with integrals over $\mathcal{M}$ of divergences, rather than surface integrals at infinity. For a fixed $\xi_\infty\in\mathbb{R}^{3,1}\oplus\lie{g}$, we define $\mathbb{P}$ in terms of its pairing with $\xi_\infty$
\begin{align}
16\pi\xi^0_\infty\mathbb{P}_0(g)&=\int_\mathcal{M}\left(\hat{\xi}^0_\infty(\onabla^i\onabla^jg_{ij}-\mathring{\Delta}\text{tr}_{\mathring{g}} g)+\onabla^i \hat{\xi}^0_\infty(\onabla^j g_{ij}-\onabla_i \text{tr}_ {\mathring{g}} g)\right) \sqrt{\mathring{g}}\\
16\pi\xi^i_\infty\mathbb{P}_i(\pi)&=2\int_\mathcal{M}\left(\hat{\xi}^i_\infty\onabla_j\pi_i^j+\pi^{ij}\onabla_i\hat{\xi}_{\infty j}\right)\\
16\pi\xi^a_\infty\mathbb{P}_a(\varepsilon)&=4\int_\mathcal{M}\left(\hat{\xi}^a_\infty\onabla_i E^i_a +E^i_a\onabla_i\hat{\xi}^a_\infty\right),
\end{align}
where indices are raised and lowered using the background metric, $\mathring{g}$. We know from \cite{phasespace} that $\mathbb{P}_\mu=(\mathbb{P}_0,\mathbb{P}_i)$ is $C^\infty$ on the constraint submanifold (if the source is integrable) and the definition is independent of $\mathring{g}$. Since $E$ is a density, the definition of $\mathbb{P}_a$ is clearly independent of $\mathring{g}$ and it is straightforward to check it is smooth.
\begin{proposition}
If $\sigma\in L^1(\Lambda^0(\mathcal{M})\otimes\lie{g})$ and $(s,S_i,\sigma)\in\mathcal{N}^*$, then $\mathbb{P}_a(\varepsilon)$ is a smooth function on the constraint submanifold,
\begin{equation}
\mathbb{P}_a\in C^\infty(\mathcal{C}(s,S,\sigma)).
\end{equation}
\end{proposition}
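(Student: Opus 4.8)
The plan is to eliminate the troublesome divergence $\onabla_i E^i_a$ from the definition of $\mathbb{P}_a$ by invoking the Gauss constraint, which holds on $\mathcal{C}(s,S,\sigma)$. Fix $\xi_\infty\in\mathbb{R}^{3,1}\oplus\lie{g}$ supported in the $\lie{g}$-factor, so that its representative $\hat\xi_\infty$ is a bounded smooth section whose derivative $\onabla_i\hat\xi^a_\infty$ has compact support. On $\mathcal{C}(s,S,\sigma)$, equation (\ref{gauss}) reads $-\partial_i\varepsilon^i_a-C^c_{ab}A^b_i\varepsilon^i_c=\sigma_a$, and since $\varepsilon=-4E$ is a density this gives $\onabla_i E^i_a=\tfrac14\big(\sigma_a+C^c_{ab}A^b_i\varepsilon^i_c\big)$. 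Substituting this into the definition of $\mathbb{P}_a$ and using $4E=-\varepsilon$ again in the remaining term yields
\[
16\pi\,\xi^a_\infty\mathbb{P}_a=\int_\mathcal{M}\hat\xi^a_\infty\sigma_a+\int_\mathcal{M}\hat\xi^a_\infty\,C^c_{ab}A^b_i\varepsilon^i_c-\int_\mathcal{M}\varepsilon^i_a\,\onabla_i\hat\xi^a_\infty .
\]
The point of this step -- and the one genuinely delicate point of the argument -- is that the original integrand $\hat\xi^a_\infty\onabla_iE^i_a$ is only known to lie in $L^2_{-5/2}$, which is \emph{not} contained in $L^1(\mathcal{M})$; indeed even the fixed source $\sigma$, which lies in $\mathcal{N}^*=L^2_{-5/2}$, need not be integrable, which is precisely why the hypothesis $\sigma\in L^1$ is needed. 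After the substitution, each of the three terms is manifestly $L^1$, so in particular $\mathbb{P}_a$ is a well-defined real number at every point of $\mathcal{C}(s,S,\sigma)$.

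Next I would estimate the three terms, which is routine given Theorem \ref{ineqs}. The first term is a \emph{constant} on $\mathcal{C}(s,S,\sigma)$, since it involves only the fixed $\sigma$, and it is finite because $\sigma\in L^1$ and $\hat\xi_\infty\in L^\infty$. For the second term, note that the structure constants $C^c_{ab}$ vanish unless all three indices lie in the semisimple summand, so only $A_\lie{k}\in W^{2,2}_{-3/2}$ actually enters; the weighted H\"older inequality (\ref{holder}) with $\delta_1=\delta_2=-3/2$ -- for which $\|\cdot\|_{2,-3/2}$ is simply the unweighted $L^2$ norm -- together with the trivial inclusion $W^{k,2}_\delta\subset L^2_\delta$ gives
\[
\Big|\int_\mathcal{M}\hat\xi^a_\infty\,C^c_{ab}A^b_i\varepsilon^i_c\Big|\le c\,\|\hat\xi_\infty\|_\infty\,\|A_\lie{k}\|_{2,-3/2}\,\|\varepsilon\|_{2,-3/2}\le c\,\|A\|_\mathcal{A}\,\|\varepsilon\|_\mathcal{E},
\]
so this term is a bounded bilinear form on $\mathcal{A}\times\mathcal{E}$. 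For the third term, $\onabla_i\hat\xi^a_\infty$ is supported on a fixed compact set on which the weighted norms are comparable to the ordinary ones and $L^2\hookrightarrow L^1$, so the term is controlled by $c\,\|\varepsilon\|_{2,-3/2}\le c\,\|\varepsilon\|_\mathcal{E}$, a bounded linear functional of $\varepsilon$.

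Finally I would assemble these pieces: a constant map, a continuous bilinear map, and a continuous linear map between Banach spaces are each $C^\infty$ in the Fr\'echet sense, and finite sums of smooth maps are smooth, so the right-hand side of the first display defines a smooth function on all of $\mathcal{F}$ -- in fact one factoring through the bounded projection $\mathcal{F}\to\mathcal{A}\times\mathcal{E}$ -- that agrees with $16\pi\,\xi^a_\infty\mathbb{P}_a$ on $\mathcal{C}(s,S,\sigma)$. Since $\mathcal{C}(s,S,\sigma)$ is an embedded Hilbert submanifold of $\mathcal{F}$ by the preceding theorem, the restriction of this smooth function is smooth on $\mathcal{C}(s,S,\sigma)$; letting $\xi_\infty$ run over a basis of the $\lie{g}$-factor then yields $\mathbb{P}_a\in C^\infty(\mathcal{C}(s,S,\sigma))$ componentwise. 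Independence of the auxiliary cutoff used to build $\hat\xi_\infty$ follows because two choices differ by some $\zeta\in C^\infty_0$, and the induced change $4\int_\mathcal{M}\onabla_i(\zeta^a E^i_a)$ vanishes by the divergence theorem. The only step requiring real care is this initial use of the Gauss constraint to trade a non-integrable integrand for an $L^1$ one; everything afterwards is bookkeeping with the inequalities of Theorem \ref{ineqs}.
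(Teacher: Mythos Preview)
Your proof is correct and follows essentially the same route as the paper's: both invoke the Gauss constraint on $\mathcal{C}(s,S,\sigma)$ to replace the non-integrable divergence $\onabla_i E^i_a$ by the fixed source $\sigma_a$ plus the commutator term $[A,\varepsilon]$, then bound the latter via the weighted H\"older inequality, using that only $A_\lie{k}\in W^{2,2}_{-3/2}$ contributes. Your treatment is somewhat more explicit about the smoothness step---decomposing the resulting functional as a constant plus a bounded bilinear plus a bounded linear map---whereas the paper compresses this into the remark that $\mathbb{P}_a$ is linear in $E$ and bounded; you also verify independence of the cutoff $\hat\xi_\infty$, which the paper addresses only in the discussion preceding the proposition.
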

\begin{proof}
Fix some $\xi_\infty$ and choose a representative $\hat{\xi}_\infty\in\xi_\infty+\mathcal{N}$. Since $\onabla(\hat{\xi}_\infty)$ is compactly supported, we have
\begin{align*}
4\pi|\xi_\infty^a\mathbb{P}_a|&\leq\|\hat{\xi}_\infty\onabla E\|_{1}+\|E \onabla\hat{\xi}_\infty\|_{1}\\
&\leq \|\hat{\xi}_\infty\|_\infty(\|\Phi_a\|_1+\|[A,E]\|_1)+\|E\|_{2,-3/2}\\
&\leq \|\hat{\xi}_\infty\|_\infty(\|\sigma\|_1+\|A_\lie{k}\|_{1,2,-3/2}\|E\|_{1,2,-3/2})+\|E\|_{2,-3/2}.
\end{align*}
$\mathbb{P}_a$ depends linearly on $E$ and is bounded, completing the proof.
\end{proof}

For $\xi\in \xi_\infty+\mathcal{N}$, we define the modified Hamiltonian,
\begin{equation}
\mathcal{H}^{RT}(G;\xi)=16\pi\xi^\alpha\mathbb{P}_\alpha-\int_\mathcal{M}\xi^\alpha\Phi_\alpha,\label{RThamiltonian}
\end{equation}
recalling $G=(g,A,\pi,\varepsilon)$.

On shell, this new Hamiltonian now gives a value for some kind of total energy of the system. One should note that critical points of this Hamiltonian correspond to constrained critical points of the energy, with $\xi$ acting as the (infinite dimensional) Lagrange multiplier. Unfortunately, neither of the terms in (\ref{RThamiltonian}) are well defined on all of $\mathcal{F}$, however, it will be shown that the dominant terms cancel out. To see this, we define the regularised Hamiltonian
\begin{align}\label{reghamil}
\mathcal{H}(G;\xi)&=\int_\mathcal{M}(\hat{\xi}_\infty-\xi)\cdot\Phi+\int_\mathcal{M}\hat{\xi}^0_\infty(\onabla^i\onabla^jg-\mathring\Delta(\tr_{\mathring{g}}g)\sqrt{\mathring{g}}-\Phi_0)\\
&+\int_\mathcal{M}\onabla^i\hat{\xi}^0_\infty(\onabla^j g_{ij}-\onabla_i\tr_{\mathring{g}}g)\sqrt{\mathring{g}}+\int_\mathcal{M}\hat{\xi}^i_\infty(2\onabla_j\pi^j_i-\Phi_i)\nonumber\\
&+\int_\mathcal{M}2\pi^{ij}\onabla_i\hat{\xi}_{\infty j}+\int_\mathcal{M}\hat{\xi}^a_\infty(4\onabla_i E^i_a-\Phi_a)+4\int_\mathcal{M}E^i_a\onabla_i\hat{\xi}^a_\infty.\nonumber
\end{align}
We have combined the terms in (\ref{RThamiltonian}) and then separated them out into 7 integrals, each of which can be shown to be finite.
\begin{theorem}\label{thmhamiltonian}
The regularised Hamiltonian (\ref{reghamil}) is a smooth functional on $\mathcal{F}\times (\xi_\infty+\mathcal{N})$. Furthermore, if $\xi\in \xi_\infty+W^{2,2}_{-1/2}$, then for all $G\in\mathcal{F}$ and $Z\in T_{G}\mathcal{F}$, we have
\begin{equation}
D\mathcal{H}_{(G;\xi)}(Z)=-\int_\mathcal{M}Z\cdot D\Phi_{G}^*(\xi)\label{hamilintegration}.
\end{equation}
\end{theorem}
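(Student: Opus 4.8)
The plan is to establish the two assertions separately, using that the regularised Hamiltonian $\mathcal{H}$ of (\ref{reghamil}) was constructed precisely so that each of its seven constituent integrals is individually finite and smooth on $\mathcal{F}\times(\xi_\infty+\mathcal{N})$. First I would verify smoothness: the first integral $\int_\mathcal{M}(\hat{\xi}_\infty-\xi)\cdot\Phi$ involves $\hat{\xi}_\infty-\xi\in\mathcal{N}=L^2_{-1/2}$ paired against $\Phi(G)\in\mathcal{N}^*=L^2_{-5/2}$, so it is bounded by $\|\hat{\xi}_\infty-\xi\|_{2,-1/2}\|\Phi(G)\|_{2,-5/2}$ via H\"older (\ref{holder}) and is smooth in $G$ by Corollary \ref{smoothness} (Fr\'echet smoothness of $\Phi$) and linear — hence smooth — in the $\mathcal{N}$-slot. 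The remaining six integrals are the ``regularised charge'' pieces: each pairs a compactly-supported object ($\onabla^i\hat{\xi}^0_\infty$, $\onabla_i\hat{\xi}_{\infty j}$, $\onabla_i\hat{\xi}^a_\infty$, all in $C^\infty_c$) or a fixed-decay background object against a polynomial expression in $(g,g^{-1},\sqrt{g},1/\sqrt{g},\onabla g,\onabla^2 g,\pi,\varepsilon)$ and their first derivatives; combinations like $\onabla^i\onabla^j g-\mathring\Delta(\tr_{\mathring g}g)\sqrt{\mathring g}-\Phi_0$ are exactly the differences in which the worst-decaying terms cancel, leaving something in $L^1$ or better. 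These are finite by the weighted Sobolev and H\"older inequalities of Theorem \ref{ineqs} (the same estimates already used in Proposition \ref{prop1} and in the proof that $\mathbb{P}_a\in C^\infty$), and smooth because they are locally bounded polynomial maps (cf. \cite{hillephillips}, ch. 26, as in Corollary \ref{smoothness}).

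Second, for the variational formula (\ref{hamilintegration}) when $\xi\in\xi_\infty+W^{2,2}_{-1/2}$, I would differentiate $\mathcal{H}$ term by term. Comparing (\ref{reghamil}) with the definitions (\ref{RThamiltonian}) of $\mathcal{H}^{RT}$ and the $\mathbb{P}_\alpha$, one sees $\mathcal{H}(G;\xi)=16\pi\xi^\alpha\mathbb{P}_\alpha(G)-\int_\mathcal{M}\xi^\alpha\Phi_\alpha(G)$ wherever both sides make sense, so formally $D\mathcal{H}_{(G;\xi)}(Z)=16\pi\,D(\xi^\alpha\mathbb{P}_\alpha)_G(Z)-D\mathcal{H}^{ADM}_G(\xi)\cdot Z$. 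By Theorem \ref{admhamil}, the second piece equals $\int_\mathcal{M}Z\cdot D\Phi^*_G(\xi)$ up to the boundary terms (\ref{boundary}) which were shown to vanish there only because $\xi\in W^{2,2}_{-1/2}$; for $\xi\in\xi_\infty+W^{2,2}_{-1/2}$ the $\mathcal{B}^2$-divergence still contributes nothing (the estimate (\ref{B2bound}) needs only $\xi\in C^0$ and bounded), but the $\mathcal{B}^1$-divergence now produces a nonzero surface integral at infinity. The point is that this surviving $\mathcal{B}^1$ boundary integral is, by Stokes' theorem applied to (\ref{boundary}) with $\xi$ replaced by $\hat{\xi}_\infty$, exactly $-16\pi$ times the first variation of $\xi^\alpha\mathbb{P}_\alpha$ — this is precisely the computation recorded in the ADM/Yang-Mills charge definitions, and it is why $\mathcal{H}^{RT}$ was defined with the $+16\pi\xi^\alpha\mathbb{P}_\alpha$ correction. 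So the two non-integrable contributions cancel and (\ref{hamilintegration}) follows. The clean way to run this rigorously, avoiding manipulation of divergent quantities, is to work directly from the seven-integral form (\ref{reghamil}): differentiate each integral (all now legitimate by the first part), integrate by parts in the pieces containing $\onabla\hat{\xi}_\infty$ to move derivatives off $Z$, and check that the compactly-supported boundary terms so generated combine with $D\Phi^*_G(\xi)$-terms to reconstruct exactly $-\int_\mathcal{M}Z\cdot D\Phi^*_G(\xi)$.

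\textbf{Main obstacle.} The real work is the second part: carefully matching the boundary integral produced by varying the $\int_\mathcal{M}\onabla\hat{\xi}_\infty\cdot(\cdots)$ terms against the interior term $-\int_\mathcal{M} Z\cdot D\Phi^*_G(\xi)$, term by term across the gravitational, shift, and gauge sectors, and confirming no stray surface contribution at infinity survives. One must check that every integration by parts is justified in the weighted spaces (so that the boundary pieces at infinity genuinely vanish by the trace estimate (\ref{tracebound})/Lemma 4.4 of \cite{phasespace}, exactly as in the proof of Theorem \ref{admhamil}) and that the only non-decaying terms are the designed cancellations; bookkeeping the factors of $16\pi$, the $\sqrt{g}$ versus $\sqrt{\mathring g}$ volume elements, and the precise form of $D\Phi^*_g,D\Phi^*_\pi,D\Phi^*_\varepsilon$ from (\ref{adjointg})--(\ref{adjointeta}) is where errors would creep in. Modulo that careful accounting, everything reduces to the inequalities of Theorem \ref{ineqs} and results already established, following the template of \cite{phasespace}.
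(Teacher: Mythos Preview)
Your proposal is correct and follows essentially the same route as the paper: bound each of the seven integrals separately for smoothness, then for (\ref{hamilintegration}) split off the $(\hat{\xi}_\infty-\xi)\in W^{2,2}_{-1/2}$ piece via Theorem \ref{admhamil} and handle the remaining $\hat{\xi}_\infty$-terms by direct differentiation and integration by parts. The one point worth sharpening is your ``Main obstacle'': the integrals containing $\onabla\hat{\xi}_\infty$ are actually the easy ones (compact support kills every boundary term), while the delicate cancellation occurs in the 2nd, 4th and 6th integrals, where the dominant boundary contributions from the background-metric charge densities cancel against those from $-\hat{\xi}_\infty\cdot D\Phi_G(Z)$, leaving residual surface terms like $\hat{\xi}^0_\infty\tilde{\Gamma}h$ and $(g-\mathring g)p$ whose $o(1)$ decay now comes from $Z\in T_G\mathcal{F}$ rather than from $\xi$---this is exactly the $\mathcal{B}^1$-type estimate from the proof of Theorem \ref{admhamil}, with the roles of $\xi$ and $(h,p,f)$ swapped.
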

Equivalently, the regularised Hamiltonian generates the correct equations of motion.
\begin{proof}
First we must establish boundedness and then smoothness follows from the same argument as Corollary \ref{smoothness}. The first integral is easily bounded by $\|\xi-\hat{\xi}_\infty\|_{2,-1/2}\|\Phi\|_{2,-5/2}$. The second and fourth integrals are bounded by Proposition 4.2 of \cite{phasespace}. The 3rd, 5th and 7th integrals are bounded because $\onabla\hat{\xi}_\infty$ has compact support, leaving only the 6th term; for this we note
\begin{equation}
4\onabla\cdot E_a-\Phi_a=4[A,E]_a,
\end{equation}
which is easily taken care of. This establishes the smoothness of $\mathcal{H}$, we now prove the validity of (\ref{hamilintegration}).

Theorem \ref{admhamil} allows us to rewrite the variation of the first term in (\ref{reghamil}) as $\int_\mathcal{M}Z\cdot D\Phi_{G}^*(\hat{\xi}_\infty-\xi)$; we will consider variations of the remaining terms separately. Consider the variation of the  second and third terms,
\begin{align}\label{cancel1}
\int_\mathcal{M}&\Big{\{}\onabla^i(\hat{\xi}^0_\infty(\onabla^j h_{ij}-\onabla_i\tr_{\mathring{g}}h))\sqrt{\mathring{g}}-\nabla^i(\hat{\xi}^0_\infty(\nabla^jh_{ij}-\nabla_i\tr_g h))\sqrt{g}\\
&+\nabla^i(h_{ij}\nabla^j\hat{\xi}^0_\infty-\tr_g h\nabla_i\hat{\xi}^0_\infty)\sqrt{g}-(h,b,p,f)\cdot D\Phi_{(g,A,\pi,\varepsilon)}^*(\hat{\xi}^0_\infty)\Big{\}}\nonumber,
\end{align}
where the middle two terms in the above expression arise from the difference 
\begin{equation*}Z\cdot D\Phi_{G}^*(\hat{\xi}^0_\infty)-\hat{\xi}^0_\infty\cdot D\Phi_{G}(Z).\end{equation*}
The third term in expression (\ref{cancel1}) vanishes as $\onabla(\hat{\xi}^0_\infty)$ has compact support. The dominant terms in the first two divergences cancel, leaving us with a boundary term of the form $\hat{\xi}^0_\infty\tilde{\Gamma}h$, ignoring factors of $g$. Note $\hat{\xi}^0_\infty\in L^\infty$, leaving us a boundary term $\tilde{\Gamma}h$, which is of the exact form of $\mathcal{B}^1$ considered above and thus contributes nothing. Similarly, the variation of the fourth and fifth terms in (\ref{reghamil}) give
\begin{equation}\label{cancel2}
2\int_\mathcal{M}\left\{\onabla_i(\hat{\xi}^k \mathring{g}_{jk}p^{ik})-\nabla_i(\hat{\xi}^j_\infty p^i_j)-Z\cdot D\Phi_{i (G)}^*(\hat{\xi}^i_\infty)\right\}.
\end{equation}
The first and second terms in (\ref{cancel2}) give the boundary term $(g-\mathring{g})p$, which is again of the form of $\mathcal{B}^1$ above. Finally, the variation of the 6th and 7th terms in (\ref{reghamil}) give
\begin{equation}
\int_\mathcal{M}\left\{-4\onabla_i(\hat\xi^a_\infty f^i_a)+4\nabla_i(\hat{\xi}^a_\infty f_a^i) -Z\cdot D\Phi_{a(G)}^*(\hat{\xi}^a_\infty)\right\},
\end{equation}
where the first and second terms here exactly cancel. Putting all of this together completes the proof
\end{proof}

\section{The First Law}\label{Sfirstlaw}
It is well known that there is a strong analogy between the laws of thermodynamics and those of black holes. The first law is usually expressed by the in terms of differentials as,
\begin{equation}
dm=\frac{\kappa}{8\pi}d\hat{A}+\Omega dJ+VdQ\label{firstlaw},
\end{equation}
valid for perturbations of stationary solutions. Here, the quantities $m$, $\kappa$, $\hat{A}$, $\Omega$, $J$, $V$ and $Q$ correspond to the mass, surface gravity, horizon area, angular velocity, angular momentum, electric potential and electric charge of the black hole, respectively. With our conditions for $\mathcal{M}$ and $g$, there will be no black hole present so this expression will reduce significantly. Also, interpreting $V$ as the potential difference between the horizon and infinity, we expect to replace $V$ with $-V_\infty$ in this expression, leading us to
\begin{equation}\label{firstlaw2}
dm+V_\infty dQ=0.
\end{equation}
Theorem \ref{main} and the subsequent corollary provide a proof of this similar to that of Sudarsky and Wald, as well as a converse statement conjectured in \cite{SW1}, which could not be rigorously shown without the Banach manifold structure given in Section \ref{Sconstraint}. We prove that a solution satisfying this version of the first law must be stationary. Sudarsky and Wald also conjectured that a similar result should hold when an interior boundary is present, however it is likely that the precise form of (\ref{firstlaw}) be modified or the phase space include boundary conditions. A potential candidate for suitable boundary conditions are the conditions of an isolated horizon; as Ashtekar, Fairhurst, Krishnan and Beetle (see \cite{AFKfirstlaw, Beetlethesis} and references therein) have established a local version of the first law for isolated horizons. However this is beyond the scope of this paper and will be considered in future work.

To begin, let us first quote the following generalisation of the method of Lagrange multipliers to Banach manifolds (Theorem 6.3 of \cite{phasespace}).
\begin{theorem}\label{banach}
Suppose $K:B_1\rightarrow B_2$ is a $C^1$ map between Banach manifolds, such that the map, ${DK_u:T_uB_1\rightarrow T_{K(u)}B_2}$, is surjective with closed kernel and closed complementary subspace for all $u\in K^{-1}(0)$. Let $f\in C^1(B_1)$ and fix $u\in K^{-1}(0)$; the following statements are equivalent:
\begin{enumerate}[(i)]
\item For all $v\in\ker DK_u$, we have
\begin{equation}Df_u(v)=0.\end{equation}
\item There is $\lambda\in B_2^*$ such that for all $v\in B_1$,
\begin{equation}Df_u(v)=\left<\lambda,DK_u(v)\right>,\end{equation}
where $\left< \, , \right>$ refers to the natural dual pairing.
\end{enumerate}
\end{theorem}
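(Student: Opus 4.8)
The plan is to reduce the statement to a standard fact of functional analysis at the level of tangent spaces, since both $f$ and $K$ are $C^1$ and the only objects that appear — $Df_u$ and $DK_u$ — are bounded linear maps on $T_uB_1$ and into $T_{K(u)}B_2$, which are Banach spaces isomorphic to the respective model spaces. The implication (ii) $\Rightarrow$ (i) is immediate: if $v\in\ker DK_u$ then $DK_u(v)=0$, so $Df_u(v)=\langle\lambda,DK_u(v)\rangle=0$. The substantive direction is the converse.

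For (i) $\Rightarrow$ (ii): by hypothesis $\ker DK_u$ is closed and admits a closed complement, so I would write $T_uB_1=\ker DK_u\oplus W$ with $W$ closed, hence itself a Banach space, and let $P:T_uB_1\to W$ be the associated bounded projection. The restriction $DK_u|_W:W\to T_{K(u)}B_2$ is bounded, injective (since $W\cap\ker DK_u=\{0\}$) and surjective (since $DK_u$ is), so by the bounded inverse theorem it has a bounded inverse $S:=(DK_u|_W)^{-1}:T_{K(u)}B_2\to W\subset T_uB_1$. I would then set
\[
\lambda:=Df_u\circ S\in(T_{K(u)}B_2)^*\cong B_2^*,
\]
which is bounded as a composition of bounded maps. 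To verify the claimed identity, decompose an arbitrary $v\in T_uB_1$ as $v=v_0+w$ with $v_0\in\ker DK_u$ and $w=Pv\in W$; then $DK_u(v)=DK_u(w)$, so $\langle\lambda,DK_u(v)\rangle=Df_u(S\,DK_u(w))=Df_u(w)$ because $S\circ DK_u|_W=\mathrm{id}_W$, while $Df_u(v)=Df_u(v_0)+Df_u(w)=Df_u(w)$ by (i). Hence $Df_u(v)=\langle\lambda,DK_u(v)\rangle$ for all $v$, as required. (Surjectivity of $DK_u$ in fact forces $\lambda$ to be unique on $T_{K(u)}B_2$, though that is not needed here.)

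The only genuinely non-formal ingredient is the bounded inverse theorem, which is exactly where completeness of $B_1$ and $B_2$ enters in an essential way — for merely normed models $S$ need not be continuous and $\lambda$ could fail to lie in $B_2^*$. The remaining points are pure bookkeeping: one passes to charts around $u$ and $K(u)$ to identify $T_uB_1$ and $T_{K(u)}B_2$ with the model spaces, and checks that the pairing $\langle\,,\,\rangle$ in (ii) is the canonical dual pairing on $T_{K(u)}B_2$. The manifold structure plays no further role, since the $C^1$ hypotheses already guarantee that $Df_u$ and $DK_u$ are the bounded linear maps used throughout.
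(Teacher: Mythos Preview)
Your proof is correct and is the standard construction of the Lagrange multiplier via the right inverse $S=(DK_u|_W)^{-1}$ furnished by the splitting and the open mapping theorem. Note, however, that the paper does not actually prove this theorem: it is merely quoted as Theorem~6.3 of \cite{phasespace} (Bartnik's phase space paper), so there is no in-paper argument to compare against. Your write-up would serve perfectly well as a self-contained proof were one desired here.
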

\vspace{6mm}
We can now apply this to prove the main result.
\begin{theorem}\label{main}
Let $G=(g,A,\pi,\varepsilon)\in\mathcal{F}$ be such that $\Phi(G)=(s,S_i,\sigma_a)$, for some \\${(s,S_i,\sigma_a)\in L^1\left( \Lambda^3\times T^*\mathcal{M}\otimes\Lambda^3\times \mathfrak{g}^*\otimes\Lambda^3\right)}$. Further, let $(\xi_\infty^\mu,\xi_\infty^a)\in\mathbb{R}^{3,1}\oplus\mathfrak{g}$ be fixed and define the energy functional $E\in C^\infty(\mathcal{C}(s,S,\sigma))$ by
\begin{equation}
E(G)=\xi^\alpha_\infty\mathbb{P}_\alpha(G),
\end{equation}
The following statements are equivalent:
\begin{enumerate}[(i)]
\item For all $Z=(h,b,p,f)\in T_G\mathcal{C}(s,S,\sigma)$
\begin{equation}
DE_G (Z)=0.
\end{equation}
\item There is $\xi\in\xi_\infty+W^{2,2}_{-1/2}\left( (\mathcal{T}\times\mathfrak{g})\right)$ satisfying
\begin{equation}
D\Phi_G^*\xi=0.
\end{equation}
\end{enumerate}
\end{theorem}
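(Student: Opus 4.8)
The plan is to apply the Lagrange-multiplier theorem for Banach manifolds, Theorem \ref{banach}, with $B_1 = \mathcal{F}$, $B_2 = \mathcal{N}^*$, $K = \Phi - (s,S,\sigma)$ (so that $K^{-1}(0) = \mathcal{C}(s,S,\sigma)$), and $f$ an extension of the energy functional $E$ to all of $\mathcal{F}$. The hypotheses of Theorem \ref{banach} have essentially been verified already: $\Phi$ is smooth (Corollary \ref{smoothness}), and in the proof that $\mathcal{C}(s,S,\sigma)$ is a Hilbert submanifold we showed $D\Phi_G$ is surjective with closed (hence splitting) kernel and trivial cokernel. So the real content is in translating the two conclusions of Theorem \ref{banach} into statements (i) and (ii) of Theorem \ref{main}.

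First I would reduce (i) to the hypothesis ``$Df_G(Z) = 0$ for all $Z \in \ker D\Phi_G = T_G\mathcal{C}(s,S,\sigma)$'' of Theorem \ref{banach}. This requires choosing a smooth extension $f \in C^1(\mathcal{F})$ of $E$. The natural candidate is to take $f$ built from the regularised Hamiltonian machinery of Section \ref{Shamiltonian}: indeed $16\pi \xi_\infty^\alpha \mathbb{P}_\alpha$ is exactly the boundary piece appearing in $\mathcal{H}^{RT}$, and one can read off from Theorem \ref{thmhamiltonian} and its proof a formula for $D f_G(Z)$ as an integral of $Z$ against the divergence/adjoint expressions. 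The key point to check is that on the constraint submanifold $Df_G$ restricted to $T_G\mathcal{C}$ agrees with $DE_G$; since $E = f$ on $\mathcal{C}(s,S,\sigma)$ and $T_G\mathcal{C} = \ker D\Phi_G$, this is immediate once $f$ is chosen to restrict to $E$.

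Next I would identify conclusion (ii) of Theorem \ref{banach} — existence of $\lambda \in B_2^* = (\mathcal{N}^*)^*$ with $Df_G(Z) = \langle \lambda, D\Phi_G(Z)\rangle$ for all $Z$ — with statement (ii) of the theorem. Here I would use the explicit form of $Df_G$: since $f$ is (up to the cancellations exhibited in the proof of Theorem \ref{thmhamiltonian}) of the form $Z \mapsto -\int_\mathcal{M} Z \cdot D\Phi_G^*(\hat\xi_\infty) + (\text{terms that pair }Z\text{ with }D\Phi_G(\,\cdot\,)\text{-type expressions})$, the multiplier $\lambda$ produced by Theorem \ref{banach} is represented by a tuple $\zeta \in \mathcal{N}$ (identifying $(\mathcal{N}^*)^*$ with $\mathcal{N}$ via the $L^2$ pairing), and the equation becomes $\int_\mathcal{M} Z \cdot D\Phi_G^*(\xi) = 0$ for all $Z \in T_G\mathcal{F}$, where $\xi := \zeta + \hat\xi_\infty \in \xi_\infty + \mathcal{N}$. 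Since this holds for every $Z$, we get $D\Phi_G^*(\xi) = 0$ as an element of the appropriate dual, i.e.\ $\xi$ is a weak solution. Finally I would invoke the elliptic regularity results: $D\Phi_G^*(\xi) = 0$ with $\xi \in \xi_\infty + \mathcal{N} \subset \xi_\infty + L^2_{-1/2}$ means $\xi - \hat\xi_\infty \in L^2_{-1/2}$ weakly solves $D\Phi_G^*(\xi - \hat\xi_\infty) = -D\Phi_G^*(\hat\xi_\infty)$ with right-hand side of the regularity required by Theorem \ref{weakstrongYM}, so $\xi - \hat\xi_\infty \in W^{2,2}_{-1/2}$, giving $\xi \in \xi_\infty + W^{2,2}_{-1/2}$ as claimed in (ii).

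The main obstacle I anticipate is bookkeeping rather than any deep new estimate: one must carefully identify the extension $f$ and verify that $Df_G$ has exactly the ``pairing against $D\Phi_G$'' form required, paying attention to which terms are genuine boundary contributions (the $\mathcal{B}^1$, $\mathcal{B}^2$ terms of Theorem \ref{admhamil}) and confirming they vanish or get absorbed, and to the fact that the regularity hypotheses on the source $(f_1,f_3,f_4)$ in Theorem \ref{weakstrongYM} are met — which is why the hypothesis $\Phi(G) \in L^1$ is imposed, ensuring $\mathbb{P}_\alpha$ and hence $E$ and $f$ are genuinely $C^1$ near $G$ on the constraint submanifold. Once the identification is set up correctly, both implications (i)$\Rightarrow$(ii) and (ii)$\Rightarrow$(i) drop out of Theorem \ref{banach} simultaneously, with (ii)$\Rightarrow$(i) being the easy direction (plug $Z \in \ker D\Phi_G$ into the pairing).
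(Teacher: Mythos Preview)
Your proposal is correct and follows essentially the same approach as the paper: the paper takes $f(G') = \mathcal{H}(G';\tilde\xi)$ for a fixed $\tilde\xi \in \xi_\infty + W^{2,2}_{-1/2}$, applies Theorem \ref{banach} to obtain $\lambda \in \mathcal{N}$ with $D\Phi_G^*(\lambda) = -D\Phi_G^*(\tilde\xi)$ weakly, and then invokes the weak-to-strong regularity result to conclude $\xi = \lambda + \tilde\xi \in \xi_\infty + W^{2,2}_{-1/2}$; the reverse implication is exactly your ``easy direction'' via Theorem \ref{thmhamiltonian}. Your write-up is in fact slightly more explicit than the paper's about the identification $(\mathcal{N}^*)^* \cong \mathcal{N}$ and about checking that $-D\Phi_G^*(\hat\xi_\infty)$ meets the regularity hypotheses of Theorem \ref{weakstrongYM}, both of which the paper passes over quickly.
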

\begin{proof}
First we show $(i)\Rightarrow (ii)$. For any fixed $\tilde{\xi}\in\xi_\infty+W^{2,2}_{-1/2}$, define $f(G')=\mathcal{H}(G';\tilde{\xi})$ for all $G'\in\mathcal{F}$ and $K=\Phi-(s,S,\sigma)$. With $u=G$, we now have the hypotheses of Theorem \ref{banach}. Note also, $T_G\mathcal{C}(s,S,\sigma)=\ker(DK_G)$ and on $T_G\mathcal{C}(s,S,\sigma)$, we have $Df_G=16\pi DE_G$. So directly applying Theorem \ref{banach}, there exists $\lambda\in\mathcal{N}$ such that for all $Z\in T_G\mathcal{F}$,
\begin{equation}
Df_g=\int_\mathcal{M}\lambda\cdot D\Phi^*_G(Z).
\end{equation}
Inserting the definition of $f$, we have
\begin{equation}
D\Phi^*_G(\lambda)=-D\Phi^*_G(\tilde{\xi})\hspace{20mm} \text{(weakly)}.
\end{equation}
Applying Theorem \ref{weakstrong} we have
\begin{equation}
D\Phi^*_G(\xi)=0,
\end{equation}
where $\xi=\lambda+\tilde{\xi}\in\xi_\infty+W^{2,2}_{-1/2}$. We now have $(i)\Rightarrow (ii)$.\\
To show $(ii)\Rightarrow (i)$, we simply must recall that $16\pi DE_G(Z)=D\mathcal{H}_G(Z)$ for all $Z\in T_G\mathcal{C}(s,S,\sigma)$ then from Theorem \ref{thmhamiltonian} we have $16\pi DE_G(Z)=D\mathcal{H}_G(Z)=-\int_\mathcal{M}Z\cdot D\Phi_G^*(\xi)=0$\\
\end{proof}
Recall, a solution $(N^\mu,V^a)=\xi$ of $D\Phi_G^*(\xi)=0$ corresponds to a generalised Killing vector $N^\mu$ and `electric potential' $V^a$, representing an infinitesimal symmetry in the bundle. That is, evolution along integral curves of $\xi$ in the bundle leaves the data fixed (see \cite{armsEYM}). It was shown by Beig and Chru\'sciel\cite{BeigChruscielKilling}, that if a Killing vector is timelike at infinity then it is asymptotically proportional to $\mathbb{P}^\mu=\eta^{\mu\nu}\mathbb{P}_{\nu}$, where $\eta$ is the Minkowski metric. It was further shown that, provided $T_{00}^2\geq T_{i0}T^{i0}$ and $\mathbb{P}\neq0$, $\zeta^\mu\mathbb{P}_\mu> 0$ for all future timelike vectors, $\zeta$. We will say the covector $\mathbb{P}_\mu$ is future timelike if $\eta^{\mu\nu}\mathbb{P}_\nu$ is timelike and $\zeta^\mu\mathbb{P}_\mu> 0$ for all future timelike vectors, $\zeta$.
\begin{corollary}\label{thecorollary}
Suppose $G\in\mathcal{F}$, $\Phi(G)=(s,S,\sigma)\in L^1$ and $\mathbb{P}_\mu$ is future timelike, then the following statements are equivalent:
\begin{enumerate}[(i)]
\item
For all $Z\in T_G\mathcal{C}(s,S,\sigma)$;
\begin{equation}
Dm_G(Z)+V_\infty\cdot DQ_G(Z)=0\label{result},
\end{equation}
where $m=\sqrt{-\mathbb{P}^\mu\mathbb{P}_\mu}$ is the total (or ``rest") ADM mass, $V_\infty\in\lie{g}$ is the Yang-Mills electric potential at infinity, and $Q_a=\frac{1}{4\pi}\oint_\infty E_{ai}ds^i$ is the Yang-Mills electric charge.
\item
$G$ is a generalised stationary initial data set with infinitesimal symmetry generator $(N^\mu,V^a)=\xi$, in the sense $D\Phi_G^*(\xi)=0$, and $N^\mu_\infty$ is proportional to $\mathbb{P}^\mu$.
\end{enumerate}
\end{corollary}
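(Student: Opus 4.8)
The plan is to recognise Corollary~\ref{thecorollary} as the specialisation of Theorem~\ref{main} in which the asymptotic generator $\xi_\infty=(\xi_\infty^\mu,\xi_\infty^a)\in\mathbb{R}^{3,1}\oplus\mathfrak{g}$ is taken to be $\xi_\infty^\mu:=-\mathbb{P}^\mu(G)/m$ and $\xi_\infty^a:=V_\infty$, where $\mathbb{P}^\mu(G)$ is the fixed ADM energy--momentum vector of $G$ and $m=\sqrt{-\mathbb{P}^\mu\mathbb{P}_\mu}$. Since $\mathbb{P}_\mu$ is assumed future timelike, $m>0$, so $\xi_\infty^\mu$ is a well-defined future-directed unit timelike vector (the minus sign being dictated by the Lorentzian signature), and $\xi_\infty$ is a legitimate fixed element of $\mathbb{R}^{3,1}\oplus\mathfrak{g}$ of the type Theorem~\ref{main} requires.

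First I would set $E(G'):=\xi_\infty^\alpha\mathbb{P}_\alpha(G')$ and observe $E\in C^\infty(\mathcal{C}(s,S,\sigma))$: this follows from the hypothesis $\Phi(G)=(s,S,\sigma)\in L^1$ together with the quoted smoothness of $\mathbb{P}_\mu$ on the constraint submanifold and the Proposition establishing $\mathbb{P}_a\in C^\infty(\mathcal{C}(s,S,\sigma))$, so Theorem~\ref{main} applies verbatim. Next I would identify $DE_G$ with the left-hand side of \eqref{result}. From the definitions, $\mathbb{P}_a=Q_a$ (apply the divergence theorem to $\onabla_i(\hat\xi_\infty^a E^i_a)$ and use that $\hat\xi_\infty^a$ equals the constant $\xi_\infty^a$ near infinity), whence $\xi_\infty^a\,D\mathbb{P}_{a,G}(Z)=V_\infty\cdot DQ_G(Z)$. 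Since $m^2=-\eta^{\mu\nu}\mathbb{P}_\mu\mathbb{P}_\nu$ and $m>0$ near $G$, the chain rule gives $Dm_G(Z)=-m^{-1}\mathbb{P}^\mu(G)\,D\mathbb{P}_{\mu,G}(Z)=\xi_\infty^\mu\,D\mathbb{P}_{\mu,G}(Z)$. Adding, $DE_G(Z)=Dm_G(Z)+V_\infty\cdot DQ_G(Z)$ for every $Z\in T_G\mathcal{C}(s,S,\sigma)$, so statement $(i)$ of the corollary is precisely ``$DE_G\equiv0$ on $T_G\mathcal{C}(s,S,\sigma)$''.

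The corollary then follows from Theorem~\ref{main} applied to this $E$: that theorem asserts $DE_G\equiv0$ on $T_G\mathcal{C}(s,S,\sigma)$ is equivalent to the existence of $\xi\in\xi_\infty+W^{2,2}_{-1/2}$ with $D\Phi_G^*(\xi)=0$. Writing $\xi=(N^\mu,V^a)$, membership in $\xi_\infty+W^{2,2}_{-1/2}$ forces $N^\mu_\infty=\xi_\infty^\mu=-\mathbb{P}^\mu(G)/m$, which is proportional to $\mathbb{P}^\mu$, giving $(i)\Rightarrow(ii)$. For $(ii)\Rightarrow(i)$ I would take a generalised stationary generator $\xi=(N^\mu,V^a)$; by definition $\xi\in\tilde\xi_\infty+W^{2,2}_{-1/2}$ with $\tilde\xi_\infty=(\lambda\mathbb{P}^\mu,V^a_\infty)$, $\lambda\neq0$ (Proposition~\ref{trivialkernel} ensures a nonzero generator cannot itself lie in $L^2_{-1/2}$, so it has a nontrivial asymptotic limit). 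Using linearity of $D\Phi_G^*$, rescale $\xi\mapsto-(\lambda m)^{-1}\xi$ so that $N^\mu_\infty$ becomes $-\mathbb{P}^\mu(G)/m$, set $V_\infty:=-(\lambda m)^{-1}V^a_\infty$, and run the equivalence above backwards to obtain $DE_G\equiv0$, that is, \eqref{result}.

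There is no deep obstruction --- the corollary is essentially a repackaging of Theorem~\ref{main} --- but three points need care. The differentiation $Dm=-m^{-1}\mathbb{P}^\mu D\mathbb{P}_\mu$ is legitimate only because $m>0$: this is exactly where the hypothesis that $\mathbb{P}_\mu$ be future timelike is used, and where the Beig--Chru\'sciel result is invoked to justify interpreting ``$N^\mu_\infty$ proportional to $\mathbb{P}^\mu$'' as the correct stationarity condition. The sign and normalisation $\xi_\infty^\mu=-\mathbb{P}^\mu/m$ are forced by matching $DE_G$ to $Dm_G+V_\infty\cdot DQ_G$, and future-timelikeness makes this vector future-directed. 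The fiddliest step is the normalisation bookkeeping in $(ii)\Rightarrow(i)$: the admissible constant rescalings of $\xi$ simultaneously pin down the multiple of $\mathbb{P}^\mu$ that appears as $N^\mu_\infty$ and the value of $V_\infty$ for which \eqref{result} holds, which is what links the ``proportional'' in $(ii)$ to the ``fixed'' $V_\infty$ in $(i)$; this is purely linear algebra, however.
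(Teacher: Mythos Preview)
Your proposal is correct and follows essentially the same approach as the paper: choose $\xi_\infty^\mu=-\mathbb{P}^\mu/m$ and $\xi_\infty^a=V_\infty$, verify via the chain rule that $DE_G=Dm_G+V_\infty\cdot DQ_G$, and reduce both directions to Theorem~\ref{main}, rescaling in $(ii)\Rightarrow(i)$ so that $N^\mu_\infty\mathbb{P}_\mu=m$. Your write-up is in fact more explicit than the paper's about the chain-rule step and the identification $\mathbb{P}_a=Q_a$; the only cosmetic wrinkle is that your appeal to Proposition~\ref{trivialkernel} yields $\tilde\xi_\infty\neq0$ rather than $\lambda\neq0$ directly, but the paper's proof tacitly assumes nonzero proportionality as well (it is implicit in ``generalised stationary'').
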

\begin{proof}
First we show $(i)\Rightarrow (ii)$. Choose $N^\mu_\infty=-\frac{1}{m}\eta^{\mu\nu}\mathbb{P}_\nu$, a future pointing unit timelike vector, such that $N^\mu_\infty\mathbb{P}_\mu=m$. For $\xi_\infty=(N^\mu_\infty,V^a_\infty)$, we have $DE=\xi^\alpha D\mathbb{P}_\alpha=Dm+V\cdot DQ$ and thus Theorem \ref{main} gives us $(ii)$.

Conversely, $(ii)$ implies the condition $(ii)$ of Theorem \ref{main}, so we have $DE_G(Z;N_\infty,V_\infty)=0$ for all $Z\in T_G\mathcal{C}(s,S,\sigma)$. Since we have $N^\mu_\infty$ proportional to $\mathbb{P}^\mu$, we can rescale $\xi$ such that $N^\mu_\infty\mathbb{P}_\mu=m$ again, which completes the proof.
\end{proof}
\section{Acknowledgements}
The author would like to thank Robert Bartnik for guidance and advice, as well as the School of Mathematical Sciences at Monash University for their support and hospitality.
\nocite{beigomurchadha,GT}
\bibliographystyle{plain}
\bibliography{../../refs}
\end{document}